\definecolor{dark-red}{rgb}{0.4,0.15,0.15}
\definecolor{dark-blue}{rgb}{0.15,0.15,0.4}
\definecolor{medium-blue}{rgb}{0,0,0.5}
\theoremstyle{definition}
\newtheorem{theorem}{Theorem}
\newtheorem{remark}{Remark}
\newtheorem{lemma}{Lemma}
\newtheorem{proposition}{Proposition}
\newcommand{\mc}[1]{\mathcal{#1}}
\newcommand{\id}{\textnormal{id}}
\newcommand{\argmax}{\operatorname{argmax}}
\newcommand{\marg}{\textnormal{marg}}
\title{Interim Correlated Rationalizability in Large Games\thanks{We would like to thank Rabah Amir, Pierpaolo Battigalli, Martin Kaae Jensen, M. Ali Khan, Marcin Pęski and Xavier Vives for helpful discussions during the writing of this paper.}}
\author{\L{}ukasz Balbus\thanks{Universty of Zielona G{\'o}ra, Poland.},  Michael Greinecker\thanks{CEPS, ENS Paris-Saclay, France.}, Kevin Reffett\thanks{Arizona State University, USA.}, \L{}ukasz Wo\'{z}ny\thanks{SGH Warsaw School of Economics, Poland.}}
\begin{document}
\date{June 2025}
\maketitle

\begin{abstract}
We provide general theoretical foundations for modeling strategic uncertainty in large distributional Bayesian games with general type spaces, using a version of interim correlated rationalizability. We then focus on the case in which payoff functions are supermodular in actions, as is common in the literature on global games. This structure allows us to identify extremal interim correlated rationalizable solutions with extremal interim Bayes–Nash equilibria. Notably, no order structure on types is assumed. We illustrate our framework and results using the large versions of the electronic mail game and a global game.
\end{abstract}

\section{Introduction}
Much of economic uncertainty is strategic uncertainty—uncertainty about what agents are going to do. However, much of applied economic modeling relies on equilibrium notions that are ill-suited to the study of strategic uncertainty. In equilibrium, every agent reacts optimally to what other agents are doing, which usually requires knowing what those other agents are doing. To circumvent this limitation imposed by equilibrium concepts, many researchers have introduced incomplete information into game-theoretic models. Under incomplete information, although every agent knows how every other agent would react to each possible piece of information, not knowing the actual information makes them uncertain about how others are going to act.

The program of studying strategic uncertainty by introducing incomplete information into the model and then studying the resulting equilibria was particularly successful in the literature on global games, starting with the pioneering work of \citet{CarlssonVanDamme1993}. Global games are a specific way to model incomplete information by introducing a small amount of uncertainty into the payoffs of a complete information game with strategic complementarities. In particular, nature draws a state from a common prior, and agents receive a private signal given by a commonly known signal technology. This perturbation, while typically small, can drastically change the behavior of players and lead to a unique equilibrium, unlike the often multiple equilibria seen in complete information games. Though the literature on global games started with two-player games, models with a continuum of players proved to be particularly useful for economic application; see for example the early survey of \citet{morris_shin_2003}. Much of this literature has been quite open about the fact that the common prior and the prior signals serve mostly as a roundabout way to introduce strategic uncertainty and break common certainty of economic behavior. Moreover, though the global games literature superficially studies mostly Bayes-Nash equilibria of games of incomplete information, many of the arguments in the background work by iteratively eliminating dominated strategies, as is done, for example, by \citet{atkeson2000rethinking}. What drives the results in the global games literature seems to be a form of rationalizability. \citet{morris_shin_2003} go so far as to write that ``[t]he natural way to understand the `trick' to global games analysis is to go back and understand what is going on in terms of higher-order beliefs.'' But higher-order beliefs derived from a common prior have a very special structure, and this structure clouds the role higher-order beliefs play. \citet*{MorrisShinYildez2016} finally managed to analyze global games directly in terms of higher order beliefs, but their analysis is restricted to games with finitely many players.\footnote{\citet{MATHEVET2014252} provides bounds on the set of rationalizable solutions in supermodular games with finitely many players but requires type spaces to have canonical orders.} The authors restrict themselves to studying games with finitely many players not because the logic of their argument breaks down otherwise but because they have no appropriate model of higher-order beliefs with a continuum of agents. This paper supplies the needed tools.

We first show that one can meaningfully formulate large Bayesian games in an interim way based on type spaces. We formulate large games in distributional form, following \citet{MR777118}; they are not given in terms of an underlying set of players but are specified by a population distribution of characteristics. We then show that one can meaningfully study rationalizability in this setting. To do so, we adapt the notion of interim correlated rationalizability due to \citet*{dekel2007interim}. Interim correlated rationalizability is the solution concept corresponding to fixing a hierarchy of beliefs regarding the underlying state of nature, extending this hierarchy of beliefs to beliefs over action profiles, and imposing the requirement that everyone is rational and that everyone is commonly believed to be rational.\footnote{An explicit epistemic characterization along these lines is given by \citet*{battigalli2011interactive}. A different, more restrictive, notion of rationalizability for games of incomplete information has been proposed by \citet{ely2006hierarchies}. Their solution concept requires a conditional independence condition that is less natural in our distributional framework, in which the correlation of beliefs between players cannot be formalized.} We closely follow the formulation of \citet{RePEc:eee:mateco:v:72:y:2017:i:c:p:82-87} who also allow for a continuum of actions and states of nature, but restrict themselves to working with finitely many players. Our first major result, Theorem \ref{selfrat}, shows the equivalence between two formulations of interim correlated rationalizability. Our official definition is based on iterative elimination of actions that are not best replies to any belief consistent with nobody playing actions that have been previously eliminated. The second formulation then characterizes interim correlated rationalizability as the most permissive self-enforcing (set-valued) theory of rational behavior.

Most games studied in the literature on global games exhibit strategic complementarities. Such games have very nice properties. There exists a largest and a smallest equilibrium, and all rationalizable solutions are sandwiched in between; see \citet{milgrom1990rationalizability}. We show that a similar result holds for our large interim Bayesian games when we introduce strategic complementarities; this is our Theorem \ref{equilibrium}. As a byproduct, we obtain the first existence result for Bayes-Nash equilibria in large games with a continuum of agents and states of nature. An important question in the global games literature is equilibrium uniqueness. Indeed, the fundamental result of \citet{CarlssonVanDamme1993} shows that under certain conditions, there is a unique equilibrium in the limit as uncertainty vanishes. It follows from our result that there exists a unique interim Bayes-Nash equilibrium if and only if there exists a unique interim correlated rationalizable action for each player and each of their beliefs if and only if the largest and the smallest interim Bayes-Nash equilibrium coincide. We also provide sufficient epistemic conditions for equilibrium uniqueness.

Throughout, the type spaces that we use to model higher-order uncertainty about the state of nature are assumed to be compact metrizable. We show that this is essentially without loss of generality when the underlying states of nature form a compact metrizable space. In that case, we show (in Appendix 3) there exists a canonical type space inducing all (suitably coherent) hierarchies of beliefs. This result is of independent interest and opens the door to epistemic analyses of large games.

To illustrate the practicality of our approach, we provide two examples. First, we give a large game version of the coordinated attack problem and show how phenomena familiar from the electronic mail 
game of \citet{rubinstein1989electronic} can occur. The application shows how one can embed the usual equilibrium framework with private information coming from a common prior in our framework. Second, we show that the epistemic condition given by \citet*{MorrisShinYildez2016} for uniqueness in finite global games in terms of $p$-common certainty of uniform rank beliefs still works in large global games. In particular, our framework provides exactly the language to make the arguments work in large games.\bigskip

We want to point out how our work relates to some literature we have not mentioned so far. Rationalizability in large games of complete information has been studied by \citet{JaraMoroni2012}, \citet{Yu2014}, and \citet{greinecker2017} in individualistic models with an explicit set of players. Much of the work in these papers is dedicated to establishing topological closure properties that are automatic in our distributional framework. 

As mentioned above, rationalizable outcomes in supermodular games are sandwiched between extremal Bayes-Nash equilibria. We rely on methods that have been used to prove the existence of Bayes-Nash equilibria in supermodular games. In particular, we draw on the work of \citet{VanZandt2010JET}, who studies the existence of interim Bayes-Nash equilibria in supermodular games of incomplete information with finitely many players, and \citet*{MR3338692}, who study existence in a distributional model that is similar to ours but much more restrictive in its modeling of uncertainty.   

For applications, our toolkit makes it possible to study the rich applications of global games without the common prior assumption. \citet{IzmalkovYildiz} and \citet*{angeletos2018quantifying} construct specific parametric global games models without common priors; our tools make it possible to study higher-order uncertainty in full generality. Our mentioned adaptation of the uniqueness result of \citet*{MorrisShinYildez2016} to a continuum of players shows that this can be done on a practical level.

\section{Prelimininaries}

If $(X,\mc{X})$ is a measurable space, we let $\Delta(X)$ be the space of probability measures on $(X,\mc{X})$ endowed with the $\sigma$-algebra generated by functions of the form $\mu\mapsto \mu(E)$ with $E\in\mc{X}$. If $X$ is Polish (separable and completely metrizable) and $\mc{X}$ its Borel $\sigma$-algebra, our $\sigma$-algebra on $\Delta(X)$ is just the Borel $\sigma$-algebra induced by the topology of weak convergence of measures; see \citet[Proposition 7.25]{MR511544}. Generally, we endow the (at most) countable product of Polish spaces with the product topology, which is again Polish, and endow every Polish space with its Borel $\sigma$-algebra, and we endow products of measurable spaces with the product $\sigma$-algebra. These conventions are consistent. If $X$ and $Y$ are nonempty Polish spaces and $\nu$ is a Borel probability measure on $X$, we write $\Delta_\nu(X\times Y)$ for the space of Borel probability measures on $X\times Y$ with $X$-marginal $\nu$. If $Y$ is compact, then $\Delta_\nu(X\times Y)$ is compact too (in the topology of weak convergence). The proof is a simple application of Prohorov's ``tightness''-characterization of relative compactness.
\bigskip

\section{Large Bayesian games}

Our game theoretical model will specify both the fundamentals, given by player characteristics, actions, states of nature, and payoffs, and the beliefs and higher order beliefs of players regarding the state of nature.

The model is a distributional model in which individual players are not explicitly specified; only their population distribution is part of the model. Implicitly, we assume that individual players are insignificant; their behavior does not influence any aggregates. The distributional approach to large games was first introduced by \citet{MR777118}. Each player's fundamentals are specified by their characteristics. There is a Polish space $C$ of \emph{characteristics} whose population distribution is given by a Borel probability measure $\nu$ on $C$. All actions are included in a compact metrizable action space $A$. Not all actions need to be available to all players. There is an \emph{action correspondence} $\mc{A}:C\to 2^A$ that specifies for a player with characteristic $c$ the set $\mc{A}(c)$ of actions available to them. An \emph{action profile} is an element of $\Delta_\nu(C\times A)$ supported on the graph of $\mc{A}$. There is also a compact metrizable space $S$ of \emph{states of nature}. To tie everything together, payoffs are encoded in a single \emph{payoff function} $v:C\times A\times S\times\Delta_\nu(C\times A)\to\mathbb{R}$. In everything that follows, we assume that $\mc{A}$ is upper hemicontinuous with nonempty and compact values and that $v$ is continuous in every argument.

\begin{remark}
One could weaken the continuity assumption by merely assuming that $\mc{A}$ is measurable and by requiring $v$ only to be measurable in $C$. This would not be more general; there would still exist a Polish topology on $C$ under which $\mc{A}$ is upper hemicontinuous (continuous, actually) and for which $v$ would be continuous. Indeed, one can take under the weaker assumption $\mc{A}:C\to 2^A$ to be a measurable function for the Hausdorff topology on the space of nonempty compact subsets of $A$ by \citet[Theorem 18.10]{MR2378491} and identify $v$ with a measurable function $u:C\to C\big[A\times S\times\Delta_\nu(C\times A)\big]$ given by $u(c)(a,s,\mu)=v(c,a,s,\mu)$ by \citet[Theorem 4.55]{MR2378491}. Now by \citet[Theorem 4.59]{MR2378491}, there exists a finer Polish topology on $C$ that introduces no new Borel sets such that the two measurable functions $\mc{A}$ and $u$ are continuous. Under this finer topology, $\mc{A}$ is a continuous correspondence by \citet[Theorem 17.15]{MR2378491}. It is also easy to see that $v$ is continuous in all arguments if $u$ is continuous.
\end{remark}

To complete the specification of the Bayesian game, we have to specify beliefs and higher-order beliefs. We do this at the interim stage in terms of type spaces. For technical convenience, we follow \citet{MR558626} and \citet{MR784702} and use spaces of states of the world that have no inherent product structure. A \emph{type space} is a triple $(T,\sigma,\tau)$ with $T$ a nonempty compact metrizable space of \emph{states of the world} and two continuous functions $\sigma:T\to S$ and $\tau:T\to\Delta_\nu\big(C\times\Delta(T)\big)$. The idea is that a state of the world specifies everything relevant in the model. In particular, it specifies the state of nature via the function $\sigma$. It also specifies the joint distribution of characteristics and beliefs. Since beliefs should be specified on everything relevant, they are specified for states of the world. Note that $\tau$ has values in $\Delta_\nu(C\times\Delta(T))$ and not in $\Delta_\nu(C\times T)$. From an element of $\Delta_\nu\big(C\times\Delta(T)\big)$ we obtain a regular conditional probability as a function from $C$ to $\Delta\big(\Delta(T)\big)$. So for each characteristic, there is a distribution over beliefs. In particular, beliefs need not be a function of characteristics; we do not formally include beliefs as part of the characteristics. Note also that we interpret players as individually insignificant, so we need no introspection condition for type spaces; no player has explicit beliefs over their own beliefs. 

The assumption that $T$ is compact metrizable (and $\sigma$ continuous) is essentially without loss of generality. Interim correlated rationalizability, as defined in the next section, is a monotone solution concept in the sense that having more hierarchies of beliefs allows for more interim correlated rationalizable solutions. One can show that a compact metrizable type space exists that includes all hierarchies of beliefs that satisfy minimal coherency conditions, provided that $S$ is compact metrizable. Indeed, each state of the world $t$ induces are unique state of nature $\sigma(s)$. So, a belief about the state of the world induces a first-order belief about $S$. A state of nature $t$ also induces a population distribution $\tau(t)$ of beliefs about the state of the world and, thus, beliefs about the state of nature. Since we care about how beliefs and states of nature relate, a belief about states of the world induces a second-order joint belief about the state of nature and about the population distribution of first-order beliefs. Since second-order beliefs include first-order beliefs as their $S$-marginals, these derived first-order beliefs coincide with the actual first-order beliefs. A similar consistency condition applies to all higher-order beliefs, too; this is the condition that hierarchies of beliefs, the whole sequence of all higher-order beliefs, are \emph{coherent}. Moreover, the hierarchies of beliefs induced by a type space can be identified with joint beliefs over the state of nature and coherent hierarchies of beliefs that, in turn, only consider coherent hierarchies of beliefs. And so on. We show in Appendix 3 that a canonical type space exists that includes all hierarchies of beliefs satisfying the restrictions imposed by such coherency conditions and, therefore, all hierarchies of beliefs occurring in any type space. If $S$ is compact metrizable, so is the canonical type space.

\section{Interim correlated rationalizability}

Our first solution concept is interim correlated rationalizability. As we will see, there are two equivalent ways to define interim correlated rationalizability. Our official definition is based on iterative elemination of actions that are not best replies to beliefs that put no mass on any previously eliminated actions. The important property of interim correlated rationalizability is that otherwise arbitrary correlation between action choices and states of nature are allowed. The restrictions on actions are given by a correspondence $\mc{S}:C\times\Delta(T)\to 2^A$. We can think of $\mc{S}$ as a theory of behavior that specifies for each characteristic of a player and for each belief of that player what actions they could play. Provided $\mc{S}$ has a measurable graph, we get a well-defined correspondence $\mc{D}_\mc{S}:T\to 2^{\Delta_\nu\big(C\times\Delta(T)\times A\big)}$ by specifying
\[\begin{split}
\mathcal{D}_\mc{S}(t)=\Big\{\kappa\in\,\Delta_\nu\big(C\times \Delta(T)\times A\big)\mid &\kappa\textnormal{ is supported on the graph of } \mc{S}\\ & \textnormal{and has }C\times\Delta(T)\textnormal{-marginal }\tau(t)\Big\}.
\end{split}\]
We can interpret $\mathcal{D}_\mc{S}(t)$ as the space of all belief-action profiles feasible at the state of the world $t$ if everyone behaves according to $\mc{S}$. 

\begin{lemma}\label{Duhc}If $\mc{S}$ is an upper hemicontinuous correspondence with nonempty compact values, then so is $\mc{D}_\mc{S}$.
\end{lemma}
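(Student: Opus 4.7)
The plan is to verify the three requirements—nonempty values, compact values, and upper hemicontinuity of $\mc{D}_\mc{S}$—separately, relying on three standard ingredients. First, since $\mc{S}$ is upper hemicontinuous with compact values into the compact Hausdorff space $A$, its graph $\Gr(\mc{S})$ is closed in $C\times\Delta(T)\times A$. Second, since $\Delta(T)\times A$ is compact metrizable (as $T$ and $A$ are), the Preliminaries yield that the ambient space $\Delta_\nu\bigl(C\times\Delta(T)\times A\bigr)$ is itself compact in the topology of weak convergence. Third, the marginalization map $\kappa\mapsto\marg_{C\times\Delta(T)}\kappa$ is continuous for weak convergence.

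For nonemptiness I would invoke a standard measurable selection result (Kuratowski–Ryll-Nardzewski) applied to $\mc{S}$, which is weakly—hence Borel—measurable because its graph is closed and its values are nonempty and compact. This gives a Borel selector $s:C\times\Delta(T)\to A$ with $s(c,\mu)\in\mc{S}(c,\mu)$, and the pushforward of $\tau(t)$ under $(c,\mu)\mapsto(c,\mu,s(c,\mu))$ lies in $\mc{D}_\mc{S}(t)$. For compact values, I would exhibit $\mc{D}_\mc{S}(t)$ as a closed subset of the compact ambient space above: the marginal constraint $\marg_{C\times\Delta(T)}\kappa=\tau(t)$ is closed by continuity of marginalization, and the support constraint $\kappa\bigl(\Gr(\mc{S})\bigr)=1$ is closed by the Portmanteau theorem applied to the closed set $\Gr(\mc{S})$ (the set of probabilities concentrated on a given closed set is weakly closed).

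For upper hemicontinuity I would use the sequential criterion. Given $t_n\to t$ and $\kappa_n\in\mc{D}_\mc{S}(t_n)$, compactness of the ambient space supplies, after passing to a subsequence, a weak limit $\kappa_n\to\kappa$. Continuity of $\tau$ and of the marginal map then gives $\marg_{C\times\Delta(T)}\kappa=\lim_n\tau(t_n)=\tau(t)$, while Portmanteau yields $\kappa\bigl(\Gr(\mc{S})\bigr)\ge\limsup_n\kappa_n\bigl(\Gr(\mc{S})\bigr)=1$, so $\kappa\in\mc{D}_\mc{S}(t)$, establishing the desired closed-graph—equivalently, upper hemicontinuous—conclusion.

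The step on which everything else rests is the closedness of $\Gr(\mc{S})$: it delivers the Borel selector needed for nonemptiness, the Portmanteau-based closedness of the support constraint for compactness of values, and the preservation of support under weak limits for upper hemicontinuity. Once this is in hand together with compactness of $\Delta_\nu\bigl(C\times\Delta(T)\times A\bigr)$, no other step should present a genuine obstacle; the argument is essentially a bookkeeping exercise in weak convergence.
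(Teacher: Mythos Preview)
Your proposal is correct and follows essentially the same route as the paper: both arguments rest on the closedness of $\Gr(\mc{S})$, the compactness of $\Delta_\nu\bigl(C\times\Delta(T)\times A\bigr)$, and the continuity of $\tau$ and the marginal map to conclude that $\mc{D}_\mc{S}$ has a closed graph into a compact space. The only cosmetic differences are that the paper shows closedness of the support constraint via the continuous function $\kappa\mapsto\int d(\cdot,\Gr(\mc{S}))\,\mathrm d\kappa$ rather than Portmanteau, and treats nonemptiness as ``obvious'' rather than spelling out the measurable-selection argument you give.
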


Let $\phi:\Delta_\nu\big(C\times\Delta(T)\times A\big)\to \Delta_\nu(C\times A)$ be the canonical marginal-function. If $\mc{D}:T\to 2^{\Delta_\nu\big(C\times\Delta(T)\times A\big)}$ has a measurable graph, we get a well-defined correspondence $\mc{S}_\mc{D}:C\times\Delta(T)\to 2^A$ by specifying
\[\begin{split}
\mc{S}_{\mc{D}}(c,\beta)=\bigg\{a\in A~\Bigl\lvert&~\textnormal{ there is some }\mu\in\Delta\big(T\times\Delta_\nu(C\times \Delta(T)\times A)\big)\textnormal{ with }\\ &T\textnormal{-marginal }\beta\textnormal{ and supported on the graph of }\mc{D}\\ &\textnormal{such that }a\in\argmax_{\mc{A}(c)} \int v\big(c,a,\sigma(t),\phi(\kappa)\big)~\mathrm d\mu(t,\kappa)
\bigg\}.
\end{split}\]
Intuitively, $\mc{S}_\mc{D}(c,\beta)$ is the set of actions a rational player with characteristic $c$ and belief $\beta$ can play when being certain that everyone behaves according to $\mc{D}$.
\begin{lemma}\label{Sdef} If $\mc{D}$ is an upper hemicontinuous correspondence with nonempty compact values, then so is $\mc{S}_{\mc{D}}$.
\end{lemma}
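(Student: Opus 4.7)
The plan is to factor $\mc{S}_{\mc{D}}$ through an auxiliary UHC correspondence of admissible ``belief-action measures'' and then read off the conclusion via Berge's maximum theorem. Because the codomain $A$ is compact metrizable, it suffices to verify that $\mc{S}_{\mc{D}}$ has nonempty values, compact values, and closed graph.

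I would first study the auxiliary correspondence
\[
M(\beta)=\Big\{\mu\in\Delta\bigl(T\times\Delta_\nu(C\times\Delta(T)\times A)\bigr)\ \big|\ \marg_T\mu=\beta,\ \mathrm{supp}\,\mu\subseteq\Gr(\mc{D})\Big\}
\]
and show that $M:\Delta(T)\to 2^{\Delta(T\times\Delta_\nu(C\times\Delta(T)\times A))}$ is UHC with nonempty compact values. The graph $\Gr(\mc{D})$ is closed because $\mc{D}$ is UHC and compact-valued on the compact space $T$; the ambient measure space is compact because $\Delta_\nu(C\times\Delta(T)\times A)$ is compact (by the Preliminaries, since $\Delta(T)\times A$ is compact). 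Nonemptiness follows from a Kuratowski--Ryll-Nardzewski measurable selection $d:T\to\Delta_\nu(C\times\Delta(T)\times A)$ from $\mc{D}$, followed by the pushforward $\mu=(\id_T,d)_*\beta$; closedness of the graph of $M$ follows from continuity of the marginal map together with the Portmanteau theorem applied to the closed set $\Gr(\mc{D})$.

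The second ingredient is joint continuity of $F(c,a,\mu):=\int v(c,a,\sigma(t),\phi(\kappa))\,\mathrm d\mu(t,\kappa)$: given $(c_n,a_n,\mu_n)\to(c,a,\mu)$, the integrands $g_n(t,\kappa)=v(c_n,a_n,\sigma(t),\phi(\kappa))$ converge uniformly on the compact domain $T\times\Delta_\nu(C\times\Delta(T)\times A)$ by uniform continuity of $v$ on compacta combined with continuity of $\sigma$ and $\phi$, and weak convergence of $\mu_n$ then yields $F(c_n,a_n,\mu_n)\to F(c,a,\mu)$. Applying Berge's maximum theorem to $\max_{a\in\mc{A}(c)}F(c,a,\mu)$---where I would invoke the observation in the remark following the introduction of $\mc{A}$ to treat $\mc{A}$ as continuous---yields that $\Phi(c,\mu):=\argmax_{\mc{A}(c)}F(c,\cdot,\mu)$ is UHC with nonempty compact values. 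Finally,
\[
\mc{S}_{\mc{D}}(c,\beta)=\bigcup_{\mu\in M(\beta)}\Phi(c,\mu)
\]
is the projection onto $A$ of the composition of the UHC compact-valued correspondences $M$ and $\Phi$, so the standard closed-graph-under-composition result delivers UHC and compact-valuedness of $\mc{S}_{\mc{D}}$.

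The main obstacle I anticipate is establishing joint continuity of $F$ in $(c,a,\mu)$ simultaneously: continuity in $\mu$ alone under weak convergence is immediate and continuity of $v$ handles $(c,a)$ for fixed $\mu$, but handling all three arguments at once requires the extra step of showing uniform convergence of the parameterized integrand family on the compact domain $T\times\Delta_\nu(C\times\Delta(T)\times A)$, which then combines with weak convergence of $\mu_n$ to give the limit of the integrals. A secondary subtlety is the need to treat $\mc{A}$ as continuous rather than merely upper hemicontinuous in order to apply Berge, which is handled by the topology-refinement remark.
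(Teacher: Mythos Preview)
Your proof is correct and takes essentially the same route as the paper's: both rely on joint continuity of expected utility and view $\mc{S}_{\mc{D}}$ as the $A$-projection of a closed-graph correspondence into a compact space, with the paper arguing directly that the relevant set of quadruples $(c,a,\beta,\mu)$ is closed while you factor through the auxiliary correspondence $M$ and invoke Berge. Your version is more explicit on several points the paper glosses over---nonemptiness via a measurable selection from $\mc{D}$, the support constraint on $\mu$, and the need to treat $\mc{A}$ as continuous (not merely upper hemicontinuous) via the topology-refinement remark---but the underlying argument is the same.
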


We now define recursively two sequences of correspondences $\langle \mc{S}_m\rangle$ with $\mc{S}_m:C\times\Delta(T)\to 2^A$ and $\langle\mc{D}_m\rangle$ with $\mc{D}_m:T\to 2^{\Delta_\nu\big(C\times\Delta(T)\times A\big)}$. We let $S_0(c,\beta)=\mathcal{A}(c)$ for all $(c,\beta)$ and let $\mathcal{D}_m=\mc{D}_{\mc{S}_m}$. Given that $\mathcal{D}_m$ is defined, we define $\mc{S}_{m+1}$ by $\mc{S}_{m+1}=\mc{S}_{\mc{D}_m}$. It follows from Lemma \ref{Duhc} and Lemma \ref{Sdef} that $\mc{S}_m$ and $\mc{D}_m$ are upper hemicontinuous correspondence with nonempty compact values. We let $\mc{S}:C\times\Delta(T)\to 2^A$ be defined by $\mc{S}(c,\beta)=\bigcap_m\mc{S}_m(c,\beta)$. It follows readily that $\mc{S}$ is also an upper hemicontinuous correspondence with nonempty compact values from \citet[Theorem 17.25]{MR2378491}. The correspondence $\mc{S}$ represents our solution concept of \emph{interim correlated rationalizability}. The action $a$ is \emph{interim correlated rationalizable} for a player with characteristics $c$ and belief $\beta$ if $a\in\mc{S}(c,\beta)$.

We now provide an alternative characterization of interim correlated rationalizability. An upper hemicontinuous nonempty and compact valued correspondence $\mathcal{R}:C\times\Delta(T)\to 2^A$ \emph{rationalizes itself} if $\mc{R}=\mc{S}_{\mc{D}_\mc{R}}$. Intuitively, a theory of behavior rationalizes itself if everyone who believes that everyone behaves according to the theory behaves rationally by following the theory.

\begin{theorem}\label{selfrat}$\mathcal{S}$ is the largest correspondence that rationalizes itself.\end{theorem}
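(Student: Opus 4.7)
The plan is to prove both parts of the statement by exploiting the monotonicity of the two operators $\mc{D}_{(\cdot)}$ and $\mc{S}_{(\cdot)}$: pointwise inclusion of correspondences is preserved by both, since a measure supported on a subgraph is also supported on the larger graph, and tightening the support restriction on beliefs can only shrink the set of best replies. This monotonicity, together with the definitions $\mc{S}_{m+1}=\mc{S}_{\mc{D}_m}$ and $\mc{D}_m=\mc{D}_{\mc{S}_m}$, is the engine of both arguments.

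For the maximality claim, given a self-rationalizing $\mc{R}$, I would show $\mc{R}\subseteq\mc{S}_m$ by induction on $m$. The base case $\mc{R}\subseteq\mc{A}=\mc{S}_0$ is built into the $\argmax_{\mc{A}(c)}$ appearing in the definition of $\mc{S}_{\mc{D}}$. The inductive step chains $\mc{R}\subseteq\mc{S}_m\Rightarrow\mc{D}_\mc{R}\subseteq\mc{D}_m\Rightarrow\mc{R}=\mc{S}_{\mc{D}_\mc{R}}\subseteq\mc{S}_{\mc{D}_m}=\mc{S}_{m+1}$. Intersecting over $m$ yields $\mc{R}\subseteq\mc{S}$.

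For the self-rationalization claim $\mc{S}=\mc{S}_{\mc{D}_\mc{S}}$, one inclusion is analogous: from $\mc{S}\subseteq\mc{S}_m$ one gets $\mc{D}_\mc{S}\subseteq\mc{D}_m$, hence $\mc{S}_{\mc{D}_\mc{S}}\subseteq\mc{S}_{m+1}$ for every $m$, so $\mc{S}_{\mc{D}_\mc{S}}\subseteq\mc{S}$. The reverse inclusion is where the real work lies. Given $a\in\bigcap_m\mc{S}_{m+1}(c,\beta)$, for each $m$ I would pick a witnessing measure $\mu_m\in\Delta\bigl(T\times\Delta_\nu(C\times\Delta(T)\times A)\bigr)$ with $T$-marginal $\beta$, supported on $\Gr(\mc{D}_m)$, under which $a$ maximizes the expected payoff. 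The target space is compact (since $\Delta_\nu(C\times\Delta(T)\times A)$ is compact by the Preliminaries and the $T$-marginal is pinned down), so a subsequence converges weakly to some $\mu$. Continuity of marginalization preserves the $T$-marginal $\beta$, and continuity of $v$, $\sigma$, and $\phi$ makes the integrand continuous in $(t,\kappa)$, so $a$ remains in the argmax for $\mu$.

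The main technical obstacle is showing that the weak limit $\mu$ is supported on $\Gr(\mc{D}_\mc{S})$. The key observation is that $\Gr(\mc{S})=\bigcap_m\Gr(\mc{S}_m)$ as a set identity, so by countable additivity a measure in $\Delta_\nu(C\times\Delta(T)\times A)$ is supported on $\Gr(\mc{S})$ if and only if it is supported on every $\Gr(\mc{S}_m)$; this upgrades to $\Gr(\mc{D}_\mc{S})=\bigcap_m\Gr(\mc{D}_m)$. The sets $\Gr(\mc{D}_m)$ are closed by upper hemicontinuity with compact values (Lemma \ref{Duhc}) and decreasing. For any fixed $M$, every $\mu_m$ with $m\geq M$ is supported on the closed set $\Gr(\mc{D}_M)$, so the portmanteau theorem gives $\mu\bigl(\Gr(\mc{D}_M)\bigr)\geq\limsup_m\mu_m\bigl(\Gr(\mc{D}_M)\bigr)=1$; intersecting over $M$ then delivers $\mu\bigl(\Gr(\mc{D}_\mc{S})\bigr)=1$, completing the proof. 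The subtlety is precisely that weak convergence does not preserve support on arbitrary sets, but the decreasing-closed structure of the $\Gr(\mc{D}_m)$ rescues the argument.
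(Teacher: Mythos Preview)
Your proposal is correct and follows essentially the same approach as the paper: extract witnessing measures $\mu_m$, pass to a weak limit by compactness, use continuity of expected utility to preserve the argmax, and use Portmanteau plus the decreasing closed structure to pin down the support of the limit; then handle maximality by the monotonicity induction you describe. The only cosmetic difference is that the paper applies Portmanteau at the level of the $\Delta_\nu\big(C\times\Delta(T)\times A\big)$-marginal and the open complements of $\Gr(\mc{S}_m)$, whereas you work directly with the closed sets $\Gr(\mc{D}_m)$ and the identity $\Gr(\mc{D}_\mc{S})=\bigcap_m\Gr(\mc{D}_m)$; these are equivalent routes to the same conclusion.
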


Note that interim correlated rationalizability is a monotone solution concept in that larger type spaces admit more interim correlated rationalizable solutions. Consequently, it is without loss of generality to work with the canonical type space that includes all combinations of a state of nature and a population distributions of hierarchies of beliefs from any type space. Further compactness and continuity requirements on types spaces we have imposed are, therefore, innocent.

\section{Strategic complementarities and the interim Bayes-Nash equilibrium}

We now impose more structure on the Bayesian games we study. We let $\succeq$ be a lattice ordering on $A$ under which the join and meet are continuous operations. It follows that $(A,\succeq)$ is a complete lattice with a closed graph; see \citet{Reny2011ECTA} for the details. We use $\succeq$ also to denote the stochastic order on $\Delta_\nu(C\times A)$ and other closely related orders derived from it. This abuse of notation should not lead to confusion. (Background material on stochastic orderings is provided in Appendix 1). 

We make the following additional assumptions:

\begin{itemize}
\item[(i)] $\mathcal{A}(c)$ is a sublattice of $A$ for each $c\in C$ (closed under the original join and meet operators).
\item[(ii)] The function $v$ is supermodular in $A$, so for each $a,a'\in A$, $c\in C$, $s\in S$, and $\mu\in\Delta_\nu(C\times A)$
\[v(c,a\vee a',s,\mu)+v(c,a\wedge a',s,\mu)\geq v(c,a,s,\mu)+v(c,a',s,\mu).\]
\item[(iii)] The function $v$ has increasing differences in $A\times\Delta_\nu(C\times A)$, so for each $a\succeq a'$, the function
\[\mu\mapsto v(c,a,s,\mu)-v(c,a',s,\mu)\]
is $\succeq$-nondecreasing.
\end{itemize}
Note that $\Delta_\nu(C\times A)$ need not be a lattice even when $A$ is; see \citet*[page 901]{MR0494447} for an example.

Under the new assumptions, we can show that there exists a largest and a smallest interim Bayes-Nash equilibrium and that every agent's interim correlated rationalizable actions are sandwiched in between.

For a correspondence $\phi$ with values in $A$, we let $\bigvee\phi$ and $\bigwedge\phi$ be the functions with values in $A$ given by pointwise suprema and infima, respectively.

A strategy profile specifies for each characteristic of an agent and each of their beliefs the conditional distribution over actions. A symmetric strategy profile will be a strategy profile in which everyone with the same type and same belief will choose the same action. So a \emph{symmetric strategy profile} is a measurable function $\zeta:C\times\Delta(T)\to A$ such that $\zeta(c,\beta)\in\mc{A}(c)$ for all $c$. In the following, we will always mean symmetric strategy profiles if we talk about a strategy profile.

If $\zeta$ is a strategy profile, we let $\zeta^*:\Delta_\nu\big(C\times\Delta(T)\big)\to\Delta_\nu\big(C\times\Delta(T)\times A\big)$ be the induced function given by
\[\zeta^*(\kappa)(B)=\int 1_B\big(c,\beta,\zeta(c,\beta)\big)~\mathrm  d\kappa(c,\beta)\]
for each Borel set $B \subseteq C\times\Delta(T)\times A$.
%

The strategy profile $\zeta$ is an \emph{(interim) Bayes-Nash equilibrium} if for all $c\in C$ and $\beta\in\Delta(T)$, $\zeta(c,\beta)$ is an element of
\[\argmax_{\mc{A}(c)} \int v(c,\cdot,s,\mu)~\mathrm d\beta\circ(\sigma,\phi\circ\zeta^*\circ\tau)^{-1}.\]

\begin{theorem}\label{equilibrium}The functions $\bigvee\mc{S}$ and $\bigwedge\mc{S}$ are both Bayes-Nash equilibria.
\end{theorem}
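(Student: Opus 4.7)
The plan is to prove $\bar\zeta:=\bigvee\mc{S}$ is an interim Bayes--Nash equilibrium (the case of $\bigwedge\mc{S}$ is symmetric) by identifying it with the greatest fixed point of a best-reply operator. Define $T$ on the complete lattice of measurable selections $\zeta$ from $\mc{A}$ by
\[T(\zeta)(c,\beta):=\bigvee\argmax_{\mc{A}(c)}\int v\bigl(c,\cdot,\sigma(t),\phi(\zeta^*(\tau(t)))\bigr)\,\mathrm d\beta(t).\]
Supermodularity makes each $\argmax$ a subcomplete sublattice (Topkis) so the $\bigvee$ is well-defined in $\mc{A}(c)$; increasing differences make $T$ monotone under the pointwise order on $\zeta$, via the $\succeq$-stochastic order on $\Delta_\nu(C\times A)$ induced on beliefs through $\phi$. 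Upper hemicontinuity of the $\argmax$ in the belief plus a measurable-maximum argument yield measurability of $T(\zeta)$. By Tarski, $T$ has a greatest fixed point $\zeta^*$, which is a Bayes--Nash equilibrium by construction.

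To identify $\zeta^*$ with $\bar\zeta$, I would iterate $T$ from the top: $\zeta_0(c,\beta):=\top_{\mc{A}(c)}$ and $\zeta_{k+1}:=T(\zeta_k)$. This decreasing sequence converges in the compact lattice to $\zeta^*$. The inductive claim is $\zeta_k(c,\beta)=\bigvee\mc{S}_k(c,\beta)\in\mc{S}_k(c,\beta)$ for every $k$. The base case $\zeta_0=\top_{\mc{A}}=\bigvee\mc{S}_0$ is immediate. For the step, assume the claim at $k$; then $\tilde\mu_{\zeta_k}(c,\beta):=\beta\circ(\id,\zeta_k^*\circ\tau)^{-1}$ is supported on the graph of $\mc{D}_{\mc{S}_k}$ (since $\zeta_k\in\mc{S}_k$ pointwise), so $\zeta_{k+1}(c,\beta)\in\mc{S}_{\mc{D}_{\mc{S}_k}}(c,\beta)=\mc{S}_{k+1}(c,\beta)$. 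Moreover, for any $a\in\mc{S}_{k+1}(c,\beta)$, $a$ is a best reply to some $\mu_a$ supported on the graph of $\mc{D}_{\mc{S}_k}$; a fiberwise coupling over $t\in T$ (using the shared $C\times\Delta(T)$-marginal $\tau(t)$ of $\mu_a(\cdot\mid t)$ and $\zeta_k^*(\tau(t))$ together with the pointwise bound $\mc{S}_k\preceq\zeta_k$) shows $\mu_a$ is $\phi$-stochastically dominated by $\tilde\mu_{\zeta_k}(c,\beta)$, so monotonicity of $\bigvee\mathrm{BR}(c,\beta;\cdot)$ yields $a\preceq\zeta_{k+1}(c,\beta)$. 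Combining, $\zeta_{k+1}=\bigvee\mc{S}_{k+1}$. Passing to the limit gives $\zeta^*(c,\beta)=\lim_k\bigvee\mc{S}_k(c,\beta)=\bigvee\bigcap_k\mc{S}_k(c,\beta)=\bar\zeta(c,\beta)$, using closedness of each $\mc{S}_k$ so that the decreasing chain $\bigvee\mc{S}_k\in\mc{S}_k$ converges to an element of $\mc{S}=\bigcap_k\mc{S}_k$ which dominates every $a\in\mc{S}$.

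The main obstacle is the $\phi$-stochastic-order comparison between beliefs living on $T\times\Delta_\nu(C\times\Delta(T)\times A)$: the useful order comes from $\succeq$ on $\Delta_\nu(C\times A)$ through $\phi$, yet $\Delta_\nu(C\times A)$ is not itself a lattice. The fiberwise coupling between an arbitrary $\kappa\in\mc{D}_{\mc{S}_k}(t)$ and $\zeta_k^*(\tau(t))$ leans on the background stochastic-order machinery of Appendix~1 together with the shared marginal $\tau(t)$ on $C\times\Delta(T)$ and the pointwise bound from the inductive hypothesis. A secondary technical issue is the measurability of $T(\zeta)$ and of the monotone limit $\zeta^*$, which needs a measurable-maximum theorem applied to the upper-hemicontinuous best-reply correspondence in the belief.
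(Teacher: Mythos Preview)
Your inductive identification $\zeta_k=\bigvee\mc{S}_k$ is exactly what the paper proves in Lemma~\ref{extrselect}, and your stochastic-dominance comparison (every $\kappa\in\mc{D}_{\mc{S}_k}(t)$ is $\phi$-dominated by $\zeta_k^*(\tau(t))$ because they share the $C\times\Delta(T)$-marginal $\tau(t)$ and $\zeta_k$ pointwise dominates any selection of $\mc{S}_k$) is the same mechanism the paper uses. So the core of your argument matches the paper.

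The gap is the step ``This decreasing sequence converges in the compact lattice to $\zeta^*$.'' Tarski gives you a greatest fixed point, and iterating a monotone $T$ from the top yields a decreasing sequence whose pointwise limit $\zeta_\infty$ \emph{dominates} every fixed point; but $\zeta_\infty$ is a fixed point only if $T$ is order-continuous (preserves infima of decreasing chains). You assert this without argument. It is plausibly provable here via continuity of $v$, weak convergence of the pushforwards $\zeta_k^*\circ\tau$, and Berge, but it is not automatic and you do not supply it. Without it, you have $\bigvee\mc{S}=\zeta_\infty\succeq\zeta^*$ but no reason for equality, and hence no reason $\bigvee\mc{S}$ is an equilibrium.

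The paper closes this differently and avoids order-continuity of $T$ altogether. After establishing $\bigvee\mc{S}=\lim_k\bigvee\mc{S}_k$ is a measurable selection of $\mc{S}$, it invokes Theorem~\ref{selfrat} (self-rationalization of $\mc{S}$): for each $(c,\beta)$ there is a $\mu$ supported on the graph of $\mc{D}_{\mc{S}}$ with $T$-marginal $\beta$ to which $\bigvee\mc{S}(c,\beta)$ is a best reply. The same stochastic-dominance comparison you use (now at the limit, with $\mc{S}$ in place of $\mc{S}_k$) shows $\beta\circ(\sigma,\phi\circ(\bigvee\mc{S})^*\circ\tau)^{-1}\succeq\mu\circ(\sigma,\phi)^{-1}$. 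Then Topkis' Theorem~6.1 gives: for any best reply $a$ against $\bigvee\mc{S}$, the join $a\vee\bigvee\mc{S}(c,\beta)$ is also a best reply against $\bigvee\mc{S}$; since $\mc{S}$ rationalizes itself, this join lies in $\mc{S}(c,\beta)$, hence equals $\bigvee\mc{S}(c,\beta)$. So the paper uses self-rationalization at the limit where you implicitly rely on continuity of the best-reply iteration; either route works, but yours needs the missing continuity step filled in.
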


We defined Bayes-Nash equilibrium only for symmetric strategy profiles for the sake of convenience. Weakening the symmetry requirement would not change the result, but involve additional notational complications. It should be noted that the proof of Theorem \ref{equilibrium} relies on methods from \citet{VanZandt2010JET} and \citet*{MR3338692}. 

\begin{remark}We work with large games in distributional form, but Theorem \ref{equilibrium} also implies an existence theorem for large games with an explicit probability space of individual players. This is due to our equilibria being symmetric. Each individualistic large game induces a canonical distributional game. Taking the composition of the function from types to characteristics and a symmetric equilibrium provides us with a pure strategy equilibrium for the individualistic game.
\end{remark}

\section{The canonical type space}

Throughout, we have assumed that $T$ is compactly metrizable. Interim correlated rationalizability is a monotone solution concept in the sense that having more hierarchies of beliefs allows for more interim correlated rationalizable solutions. We now show that a compact metrizable type space exists that includes all hierarchies of beliefs that satisfy minimal coherency conditions, provided that $S$ is compact metrizable. 

Let $(T,\sigma,\tau)$ be a type space. Each state of the world $t$ induces are unique state of nature $\sigma(s)$. So, a belief about the state of the world induces a first-order belief about $S$. A state of the world $t$ also induces a population distribution $\tau(t)$ of beliefs about the state of the world and, thus, beliefs about the state of nature. Since we care about how beliefs and states of nature relate, a belief about states of the world induces a second-order joint belief about the state of nature and about the population distribution of first-order beliefs. Since second-order beliefs include first-order beliefs as their $S$-marginals, these derived first-order beliefs coincide with the actual first-order beliefs. A similar consistency condition applies to all higher-order beliefs too; this is the condition that hierarchies of beliefs, the whole sequence of all higher-order beliefs, are \emph{coherent}. Moreover, the hierarchies of beliefs induced by a type space can be identified with joint beliefs over the state of nature and coherent hierarchies of beliefs that, in turn, only consider coherent hierarchies of beliefs. And so on. We show in Appendix 3 that a canonical type space exists that includes all hierarchies of beliefs satisfying the restrictions imposed by such coherency conditions and, therefore, all hierarchies of beliefs occurring in any type space. If $S$ is compact metrizable, so is the canonical type space.

\section{Example: Coordinated attack 
}
There is a continuum of players (normalized to 1) that need to coordinate to attack one of two spots (risky or safe 
).
There are hence two actions, where $a=1$ denotes a decision to attack the risky and $a=0$ the safe spot.
There are two states of nature determining the game that will be played. If $s=0$ the payoffs are as follows:
$$
\left\{
\begin{array}{lll}
M(1-\vartheta)&\mbox{if}&a=0\\
-L(1-\vartheta)&\mbox{if}&a=1,\\
\end{array}
\right.
$$
where $\vartheta$ denotes the fraction of players choosing $a=1$ and $M$ and $L$ are parameters that satisfy $L>M>0$. 
If $s=1$ the payoffs are as follows:
$$
\left\{
\begin{array}{lll}
0&\mbox{if}&a=0\\
M\vartheta-L(1-\vartheta)&\mbox{if}&a=1.\\
\end{array}
\right.
$$
For each state, players play a coordination game. Indeed, the marginal payoff from taking action $1$ (versus $0$) increases with $\vartheta$ for each state.
Each game has two symmetric Nash equilibria: the greatest one, where all players attack the risky spot ($\vartheta^*=1$) and the least one, in which none of players attacks the risky spot ($\vartheta^*=0$). Players do not observe the state.
There is a common prior with the probability that $s=1$ given by $\pi$. For each $\pi\in[0,1]$, again there are two Nash equilibria: $\vartheta^*=1$ and $\vartheta^*=0$ of the incomplete information game.


Every player has a position on the circle of unit circumference $[0,1)$. Players at position $0$ are perfectly informed of the true state and receive a positive signal if the state is $1$. The signal travels along the circle, potentially passing the same position several times, until it dies at some random time. The mechanism is analogous to the one used by 
\citet{rubinstein1989electronic}. When we refer to the number of signals a player receives, we mean the number the signal passed them by.

Formally, we let $C=[0,1)$ and $\nu$ be the uniform distribution. The only relevant characteristic of a player is their position. We take $S=A=\{0,1\}$. The payoff function $v$ is given by
\[v(c,a,s,\mu)=
\left\{
\begin{array}{lll}
M\mu_A(\{0\})&\mbox{if}&s=0,a=0\\
-L\mu_A(\{0\})&\mbox{if}&s=0,a=1\\
0&\mbox{if}&s=1,a=0\\
M\mu_A(\{1\})-L\mu_A(\{0\})&\mbox{if}&s=1,a=1,\\
\end{array}
\right.\]
with $\mu_A$ denoting the $A$-marginal of $\mu\in\Delta_\nu(C\times A)$.

We let $T=\{0,0\}~\cup~\{1\}\times[0,\infty]$. The second coordinate denotes the time when the signal dies. If this time is $\infty$, the signal never dies, a situation corresponding to common certainty of the state of nature. The function $\sigma$ is simply the projection onto $S$. The function $\tau$ will be defined below.

We assume the time the signal dies follows an exponential distribution (analogous to the geometric distribution used by Rubinstein) with intensity parameter $\alpha>0$. The corresponding cumulative distribution function is given for nonnegative $x$ by
\[F(x)=1-e^{-\alpha\,x}.\]
The expected time the signal dies is $1/\alpha$.

Next, we specify $\tau$. A player's belief will be fully determined by the player's position and the time the signal dies. If a player positioned at $i$ receives no signal, then either the state of nature is $0$ or the state of nature is $1$, but the signal died before reaching $i$. The conditional probability of latter happening is given by \[\pi_{i}=\frac{\pi(1-e^{-\alpha i})}{1-\pi+\pi(1-e^{-\alpha i})}.\]
The belief of a player who received no signal that the signal died before $x$ conditional on the state of nature being $1$ is given by \[\frac{1-e^{-\alpha x}}{1-e^{-\alpha i}}\quad \mbox{for }x\in (0,i).\]
If a player receives at least one signal, then they are certain that the state of nature is $1$. If a player positioned at $i$ receives exactly $n>0$ signals, the signal must have died in the interval $(ni,ni+1)$. The conditional belief of such a player that the signal died after $ni$ but before $ni+x$ is given by 
\[\frac{1-e^{-\alpha x}}{1-e^{-\alpha}}\quad \mbox{for }x\in (0,1).\]
These conditional probabilities are enough to specify the function $\tau$.

The following proposition mirrors the central result of \cite{rubinstein1989electronic}.
\begin{proposition}
The equilibrium in which all players always play $0$ is the only equilibrium in which everyone who did not observe a signal plays $0$. 
\end{proposition}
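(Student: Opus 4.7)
My plan is by induction on the number $n\geq 0$ of signals observed, showing that in any equilibrium satisfying the hypothesis every player with $n$ signals plays $0$. The base case $n=0$ is given (and covers all of state $s=0$, where no signals exist). For the inductive step, assume $f_k\equiv 0$ for $k\leq n$, writing $f_m(i)\in\{0,1\}$ for the equilibrium action of a player at position $i$ with $m$ observed signals; suppose for contradiction that $I=\{i\in[0,1):f_{n+1}(i)=1\}$ has positive Lebesgue measure.

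A player at $i$ with $n+1$ signals knows $s=1$ and $T\in[i+n,i+n+1)$ with density $\alpha e^{-\alpha(T-i-n)}/(1-e^{-\alpha})$. In state $1$, action $1$ yields expected utility $(L+M)E[\vartheta]-L$ while action $0$ yields $0$, so she plays $1$ only if $\phi(i):=E[\vartheta(T)\mid T\in[i+n,i+n+1)]\geq \vartheta^*:=L/(L+M)$, where $\vartheta(T)$ is the state-$1$ fraction choosing $1$ when the signal dies at $T$. Splitting into Case A ($T\in[i+n,n+1)$, where players at $(T-n,1)$ have only $n$ signals and play $0$ by the IH) and Case B ($T\in[n+1,i+n+1)$, where every player has at least $n+1$ signals), and swapping the order of integration I obtain
\[\phi(i)=\int_0^i p_1(j,i)\,f_{n+1}(j)\,dj+\int_i^1 q_4(j,i)\,f_{n+1}(j)\,dj+\int_0^i p_2(j,i)\,f_{n+2}(j)\,dj,\]
with $p_1(j,i),q_4(j,i)$ the conditional probabilities that player $j$ has $n+1$ signals (for $j<i$, resp.\ $j>i$) and $p_2(j,i)$ the probability that player $j<i$ has $n+2$ signals---each an elementary exponential expression.

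Set $i^*:=\operatorname{ess\,inf} I\in[0,1)$ and choose $i_k\in I$ with $i_k\downarrow i^*$. Because $f_{n+1}\equiv 0$ essentially on $[0,i^*)$, the first integral in $\phi(i_k)$ reduces to one over $(i^*,i_k]$ of bounded integrand and tends to $0$; by dominated convergence the other two are asymptotically at most $\int_{i^*}^1 q_4(j,i^*)\,dj+\int_0^{i^*}p_2(j,i^*)\,dj$. The substitutions $v=j-i^*$ and $u=j+1-i^*$ show that this sum equals $\int_0^1(e^{-\alpha x}-e^{-\alpha})/(1-e^{-\alpha})\,dx$, which is exactly $E[Y]=1/\alpha-1/(e^\alpha-1)$, the mean of the conditional exponential on $[0,1)$.

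An elementary calculation gives $E[Y]<1/2$ for every $\alpha>0$, while $L>M$ gives $\vartheta^*>1/2$. Hence $\limsup_k\phi(i_k)\leq E[Y]<\vartheta^*$, contradicting $\phi(i_k)\geq\vartheta^*$ for each $i_k\in I$; so $I$ has measure zero and the step is complete. The main obstacle is the telescoping identity that reduces the two kernel integrals to $E[Y]$---the continuum analogue of Rubinstein's ``probability at least one-half'' contagion step---after which the gap $E[Y]<1/2<\vartheta^*$ closes the argument automatically.
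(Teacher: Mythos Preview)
Your argument is correct; the key computation—that the relevant upper bound on the expected fraction of investors equals the mean $1/\alpha-1/(e^{\alpha}-1)$ of the truncated exponential, and that this is strictly below $1/2<L/(L+M)$—matches the paper exactly, and your reduction of the two kernel integrals via the substitutions $v=j-i^*$ and $u=j+1-i^*$ to $\int_0^1 P(Y\geq x)\,dx=E[Y]$ is clean and correct.

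The organizational route differs. The paper does \emph{not} induct on the signal count. Instead it encodes a player--signal pair as a single ``time'' $t\in[0,\infty)$ (position $t\bmod 1$, signal count $\lfloor t\rfloor+1$), takes the essential infimum $t^*$ of all \emph{active} times across every level at once, and runs your contradiction step a single time at $t^*$. This global infimum simultaneously kills the contributions that you handle via the inductive hypothesis ($f_k\equiv 0$ for $k\le n$) and via the first integral ($f_{n+1}\equiv 0$ on $[0,i^*)$); it also removes the need to bound $f_{n+2}$ separately, since times $<t^*$ already cover all relevant lower cases. Your induction is slightly more laborious but entirely elementary, and you are more explicit than the paper about passing to a sequence $i_k\downarrow i^*$ (the paper argues somewhat informally ``at'' $t^*$). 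One small point: your induction yields $f_{n+1}=0$ almost everywhere; to conclude pointwise uniqueness you should add the one-line observation that once $\vartheta=0$ almost surely, action $0$ is the strict best reply for every $(c,\beta)$.
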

\begin{proof}Suppose there were an equilibrium in which everyone who did not observe a signal plays $0$ but in which action $1$ is sometimes played. We say $t$ is \emph{active} if the player $t\mod 1$ plays $1$ after observing $\lfloor t \rfloor+1$ signals.
Let $t^*$ be the essential infimum of all active times. By assumption, $t^*<\infty$. Let $i^*=t^*\mod 1$. If $i^*$ receives $\lfloor t^* \rfloor+1$ signals, they know that the only players that might play $1$ are those that that have a larger position and received the same number of signals or have a lower position and received one more signal. The expected fraction of such players is
\[\int_{0}^1x \bigg(\frac{1-e^{-\alpha x}}{1-e^{-\alpha}}\bigg)' dx=\int_{0}^1x\frac{\alpha e^{-\alpha x}}{1-e^{-\alpha}}dx=\frac{1}{\alpha}-\frac{1}{e^{\alpha}-1}.\]
Taking action $1$ can only be optimal if 
$M\vartheta-(1-L)\vartheta\geq 0$ or, equivalently
\[\vartheta\geq \frac{L}{M+L}.\]
We must have, therefore, 
\[\frac{1}{\alpha}-\frac{1}{e^{\alpha}-1}\geq\frac{L}{M+L}>\frac{1}{2}.\] But the function
\[\alpha\mapsto \frac{1}{\alpha}-\frac{1}{e^{\alpha}-1}\] is strictly decreasing on the strictly positive real line, and has limit $1/2$ as $\alpha$ decreases to $0$. So, this inequality can never be satisfied. 
\end{proof}

\section{Example: Equilibrium uniqueness in global games}

We show here that the arguments of \citet*{MorrisShinYildez2016} for uniqueness in finite global games in terms of higher order beliefs can be formulated in our setting; our framework is flexible enough. 

In our example, characteristics play no role and are notationally suppressed. The action space is binary, $A=\{0,1\}$ (``noninvest'' and ``invest'') and the states of nature is  given by $S=[-1,2]$. We write $\vartheta$ for the fraction of players playing $1$. Payoff functions are given, by slight abuse of notation, by
\[v(a,s,\vartheta)=a\cdot (s+\vartheta-1).\]
Under this specification, the game is supermodular. In particular, there is a largest and a smallest equilibrium, given by  $\bigvee \mathcal{S}:\Delta(T)\to A$ and $\bigwedge \mathcal{S}:\Delta(T)\to A$, respectively, and every rationalizable solution lies in between by Theorem \ref{equilibrium}. Since there are no proper characteristics, the extremal equilibria specify what an agent with a given belief might do. In particular, we are able to discuss when a belief gives rise to a unique rationalizable action, even in settings in which not everyone has a unique rationalizable action. This distinction is somewhat buried when there are only two players but is clear in our setting.

Our game is linear, and we make heavy use of the fact that usually only averages matter. For example, we let $x_\beta=x(\beta)=\int \sigma~\mathrm d\beta$ be the expected state under the belief $\beta$. Also, we let the expected fraction of players with a belief in $E\subseteq\Delta(T)$ under the belief $\beta$ be given by $\text{F}_\beta(E)=\int \tau(E)~\mathrm d\beta$. For each player with belief $\beta$, the expected fraction of players that are less optimistic about the state of nature than them is their \emph{rank}, formally defined as
\[\text{R}(\beta)=\text{F}_\beta\big(\big\{\beta'\in\Delta(T)\mid x_{\beta'}\leq x_\beta\big\}\big).\]
An important role in our uniqueness result will be played by the set beliefs of players who believe their rank to be close to the median rank. For $\epsilon>0$, let
\[\text{URB}_\epsilon=\big\{\beta\in\Delta(T)\mid 1/2-\epsilon\leq \text{R}(\beta)\leq 1/2+\epsilon\big\}.\] Another important set is the set of beliefs for which investing is ``$\epsilon$-strictly risk dominant,'' given by
\[\text{SRD}_\epsilon=\big\{\beta\in\Delta(T)\mid x_\beta>1/2+\epsilon\big\}.\]
Dually,
\[\text{nSRD}_\epsilon=\big\{\beta\in\Delta(T)\mid x_\beta<1/2-\epsilon\big\}.\]
is the set of beliefs for which noninvesting is strictly risk dominant.
Our last ingredient is suitable (approximate) belief operators and certainty operators. For reasons we will explain below, we define these operators on $\Delta(T)$. For a measurable function $f:\Delta(T)\to\mathbb{R}$, define a corresponding belief operator by
\[\text{B}_f(E)=\big\{\beta\in E\mid \text{F}_\beta(E)\geq f(\beta)\big\}.\] In particular, for $f=1-x$, we have
\[\text{B}_{1-x}(E)=\big\{\beta\in E\mid \text{F}_\beta(E)\geq 1-x(\beta)\big\},\]
and for $f=x$, we have
\[\text{B}_x(E)=\big\{\beta\in E\mid \text{F}_\beta(E)\geq x(\beta)\big\}.\]
We identify a number $p$ with the corresponding constant function $\Delta(T)$ when writing  $\text{B}_p$. For each operator $\text{B}_f$, we recursively define its iterates by $\text{B}^1_f(E)=\text{B}_f(E)$ and $\text{B}_f^{n+1}=\text{B}_f\big(B_f^n(E)\big)$, and the corresponding certainty operator by $\text{C}_f(E)=\bigcap_n \text{B}_f^n(E)$. We can use the operators $\text{C}_{1-x}$ and $\text{C}_{x}$ to characterize which actions could be played under a given belief.
\begin{remark}
 These belief and certainty operators operate on Borel subsets of $\Delta(T)$ and not on Borel subsets of $T$, which might seem more natural. Belief operators usually satisfy an introspection condition. In our framework, there is no difference between believing that everyone believes something and believing that everyone else believes something. However, we need the former to hold, and that is why we define the belief and certainty operators the way we do. In general, one could define belief operators that operate on both beliefs and sets of states of the world simultaneously to achieve greater generality. In this application, there is no need for such generality.  
\end{remark}

\begin{lemma} Investment can be optimal under $\beta$, $\bigvee \mathcal{S}(\beta)=1$, if and only if $\beta\in\text{C}_{1-x}\big(\Delta(T)\big)$. Similarly, noninvestment can be optimal under $\beta$, $\bigwedge \mathcal{S}(\beta)=0$, if and only if $\beta\in\text{C}_{x}\big(\Delta(T)\big)$.
\end{lemma}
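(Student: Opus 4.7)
The plan is to verify the two identities by double inclusion, exploiting the linearity of $v(a,s,\vartheta)=a(s+\vartheta-1)$: under belief $\beta$, investing is weakly optimal against a profile inducing expected fraction of investors $\vartheta$ iff $\vartheta\geq 1-x_\beta$, and noninvesting is weakly optimal iff $\vartheta\leq 1-x_\beta$ (equivalently, the expected fraction of noninvestors is at least $x_\beta$). Because $A=\{0,1\}$, the BNEs $\bigvee\mathcal{S}$ and $\bigwedge\mathcal{S}$ furnished by Theorem~\ref{equilibrium} are single-valued, so the expected fraction of investors at $\beta$ under $\bigvee\mathcal{S}$ is exactly $\text{F}_\beta(E^\vee)$, where $E^\vee:=\{\beta:\bigvee\mathcal{S}(\beta)=1\}$; symmetrically the fraction of noninvestors at $\beta$ under $\bigwedge\mathcal{S}$ is $\text{F}_\beta(E^\wedge)$ for $E^\wedge:=\{\beta:\bigwedge\mathcal{S}(\beta)=0\}$.

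For $E^\vee\subseteq \text{C}_{1-x}(\Delta(T))$, I would induct on $n$ that $E^\vee\subseteq \text{B}_{1-x}^n(\Delta(T))$. The base case is immediate. For the step, fix $\beta\in E^\vee$: the best-response condition in $\bigvee\mathcal{S}$ yields $\text{F}_\beta(E^\vee)\geq 1-x_\beta$, and the inductive hypothesis together with monotonicity of $\text{F}_\beta$ gives $\text{F}_\beta\bigl(\text{B}_{1-x}^n(\Delta(T))\bigr)\geq 1-x_\beta$. Since $\beta\in E^\vee\subseteq \text{B}_{1-x}^n(\Delta(T))$, this places $\beta$ in $\text{B}_{1-x}^{n+1}(\Delta(T))$. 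Intersecting over $n$ gives $E^\vee\subseteq \bar E:=\text{C}_{1-x}(\Delta(T))$.

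For the reverse inclusion I would introduce the candidate symmetric strategy profile $\zeta^+:=\mathbf{1}_{\bar E}$ and show it is a BNE. Borel measurability of $\bar E$, and hence of $\zeta^+$, follows because $\beta\mapsto \text{F}_\beta(E)$ is Borel for Borel $E$ (continuity of $\tau$ plus bounded convergence), each iterate is Borel, and countable intersections preserve this. Downward continuity of $\text{F}_\beta$ along the nested sequence $\text{B}_{1-x}^n(\Delta(T))\downarrow \bar E$ yields the fixed-point identity $\bar E=\text{B}_{1-x}(\bar E)$. Consequently, at $\beta\in\bar E$ we have $\text{F}_\beta(\bar E)\geq 1-x_\beta$, so investing is a best response; at $\beta\notin\bar E$, pick the smallest $n\geq 1$ with $\beta\notin \text{B}_{1-x}^n(\Delta(T))$, which yields either $n=1$ with $x_\beta<0$ or $\text{F}_\beta\bigl(\text{B}_{1-x}^{n-1}(\Delta(T))\bigr)<1-x_\beta$, and the nesting $\bar E\subseteq \text{B}_{1-x}^{n-1}(\Delta(T))$ forces $\text{F}_\beta(\bar E)<1-x_\beta$, making noninvestment strictly optimal. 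So $\zeta^+$ is a BNE, and since $\bigvee\mathcal{S}$ is the largest BNE by Theorem~\ref{equilibrium}, pointwise domination gives $\bigvee\mathcal{S}(\beta)\geq \zeta^+(\beta)=1$ on $\bar E$, i.e., $\bar E\subseteq E^\vee$. The companion identity $E^\wedge=\text{C}_x(\Delta(T))$ follows by the mirror argument with $\text{B}_x$, the candidate $\zeta^-:=1-\mathbf{1}_{\text{C}_x(\Delta(T))}$, and the fact that $\bigwedge\mathcal{S}$ is the smallest BNE.

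I expect the main obstacle to be establishing the fixed-point identity $\bar E=\text{B}_{1-x}(\bar E)$---specifically the inclusion $\bar E\subseteq \text{B}_{1-x}(\bar E)$, which requires downward continuity of $\text{F}_\beta$ on the decreasing iterates---together with the strict inequality $\text{F}_\beta(\bar E)<1-x_\beta$ at points off $\bar E$, without which $\zeta^+$ would fail to pin down noninvestment outside $\bar E$ and the equilibrium verification would collapse. The remaining bookkeeping (Borel measurability of the iterates, symmetry between $\bigvee$ and $\bigwedge$, and the translation from payoff linearity to threshold form) is routine once the fixed-point step is in place.
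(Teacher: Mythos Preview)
Your argument is correct, but it takes a genuinely different route from the paper. The paper proceeds by identifying the iterations directly: it observes that $\text{B}_{1-x}\big(\Delta(T)\big)=\{x_\beta\geq 0\}$ is precisely the set on which investment is a best reply when everyone invests, and then argues inductively that $\text{B}_{1-x}^{n+1}\big(\Delta(T)\big)$ is exactly the set on which investment is a best reply when the current profile is $\mathbf{1}_{\text{B}_{1-x}^{n}(\Delta(T))}$. In other words, the paper shows $\{\bigvee\mathcal{S}_{n}=1\}=\text{B}_{1-x}^{n}\big(\Delta(T)\big)$ step by step and then intersects. This is short because it piggybacks on the monotone best-reply iteration already established in the proof of Theorem~\ref{equilibrium} (via Lemma~\ref{extrselect}).

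Your approach instead works at the limit: you use that $\bigvee\mathcal{S}$ is a BNE to get $E^\vee\subseteq\text{C}_{1-x}\big(\Delta(T)\big)$, and then construct the BNE $\mathbf{1}_{\bar E}$ to squeeze the reverse inclusion out of maximality of $\bigvee\mathcal{S}$. This buys you a clean fixed-point picture ($\bar E=\text{B}_{1-x}(\bar E)$) and makes the equilibrium verification at points off $\bar E$ explicit, but at the cost of the extra fixed-point step and the measurability bookkeeping, neither of which the paper needs. One small caveat: Theorem~\ref{equilibrium} as stated only asserts that $\bigvee\mathcal{S}$ is \emph{a} BNE; that it is the \emph{largest} is the surrounding ``sandwich'' claim (every BNE is a selection of $\mathcal{S}$), so you should cite that rather than Theorem~\ref{equilibrium} alone when invoking maximality.
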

\begin{proof}We do the first case; the second case is analogous. Indeed, $\beta\in \text{B}_{1-x}\big(\Delta(T)\big)$ is equivalent to $x_\beta\geq 0$, so investment is optimal if everyone else invests too. If everyone with beliefs in $\text{B}_{1-x}^n\big(\Delta(T)\big)$ invests, then
\[\text{B}_{1-x}^{n+1}\big(\Delta(T)\big)=\Big\{\text{F}_\beta(\text{B}_{1-x}^n\big(\Delta(T)\big))\geq 1-x(\beta)\Big\}\]
contains the beliefs under which investment is optimal.
\end{proof}
Importantly, since every player must either invest or not, investment is uniquely rationalizable if $\beta\notin\text{C}_{x}\big(\Delta(T)\big)$, and nonivestment is uniquely rationalizable if $\beta\notin \text{B}_{1-x}\big(\Delta(T)\big)$. 


For a measurable set $E\subseteq\Delta(T)$ and $\epsilon\geq 0$, let \[x^{**}_\epsilon(E)=\sup\{x_\beta\mid\beta\in E, x_\beta\leq \text{R}(\beta)+\epsilon\}\] and \[x^{\epsilon}_{**}(E)=\inf\{x_\beta\mid\beta\in E, x_\beta\geq \text{R}(\beta)-\epsilon\}.\] Here, $x^{**}_{\epsilon}(E)$ is the highest value (of the state) for which any type within $E$ has a value exceeding its rank by at most $\epsilon$. Similarly, $x^\epsilon_{**}(E)$ is the lowest value (of the state) for which any type within $E$ has a rank exceeding the corresponding value by at most $\epsilon$.

\begin{lemma}\label{investstars} Let $E\subseteq \Delta(T)$ be closed and $E\subseteq B_p(E)$. Investment is uniquely rationalizable for any $\beta\in E$ such that
\[x_\beta>x^{**}_{1-p}(E).\]
Similarly, noninvestment is uniquely rationalizable for any $\beta\in E$ such that
\[x_\beta<x_{**}^{1-p}(E).\]

\end{lemma}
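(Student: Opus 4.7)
My plan is to handle the investment claim directly and note that the noninvestment claim is the exact dual. By the preceding lemma, proving that investment is uniquely rationalizable for $\beta\in E$ reduces, in the binary action setting, to showing $\beta\notin\text{C}_x(\Delta(T))$. I argue by contradiction: suppose $\beta\in E\cap\text{C}_x(\Delta(T))$ with $x_\beta>x^{**}_{1-p}(E)$.

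The key observation is that $E\cap\text{C}_x(\Delta(T))$ is compact, so the continuous function $\beta'\mapsto x_{\beta'}$ attains its maximum there at some $\beta^*$, with $x_{\beta^*}\geq x_\beta$. Closedness of $\text{C}_x(\Delta(T))$ follows because $\beta'\mapsto\text{F}_{\beta'}(E')$ is upper semicontinuous for closed $E'$ (Portmanteau, using continuity of $\tau$ and compactness of $\Delta(T)$), so the iterates $\text{B}_x^n(\Delta(T))$ are closed by induction and their intersection is closed. Monotone convergence along the decreasing sequence $\text{B}_x^n(\Delta(T))$ then upgrades the inequality $\text{F}_{\beta^*}\bigl(\text{B}_x^n(\Delta(T))\bigr)\geq x_{\beta^*}$ for every $n$ to the fixed-point statement $\text{F}_{\beta^*}\bigl(\text{C}_x(\Delta(T))\bigr)\geq x_{\beta^*}$. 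Combining this with $\text{F}_{\beta^*}(E)\geq p$ (from $\beta^*\in E\subseteq\text{B}_p(E)$) via inclusion--exclusion,
\[ \text{F}_{\beta^*}\bigl(E\cap\text{C}_x(\Delta(T))\bigr)\geq \text{F}_{\beta^*}(E)+\text{F}_{\beta^*}\bigl(\text{C}_x(\Delta(T))\bigr)-1\geq x_{\beta^*}-(1-p). \]
Since $\beta^*$ maximises $x$ on $E\cap\text{C}_x(\Delta(T))$, this set is contained in $\{\beta':x_{\beta'}\leq x_{\beta^*}\}$, so $\text{R}(\beta^*)\geq x_{\beta^*}-(1-p)$. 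Thus $\beta^*$ lies in the set over which $x^{**}_{1-p}(E)$ is the supremum, giving $x_{\beta^*}\leq x^{**}_{1-p}(E)$ and contradicting $x_\beta>x^{**}_{1-p}(E)$.

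For the noninvestment claim, take $\beta^*$ minimising $x$ on the compact set $E\cap\text{C}_{1-x}(\Delta(T))$, use $\text{F}_{\beta^*}\bigl(\text{C}_{1-x}(\Delta(T))\bigr)\geq 1-x_{\beta^*}$ and the inclusion $E\cap\text{C}_{1-x}(\Delta(T))\subseteq\{\beta':x_{\beta'}\geq x_{\beta^*}\}$ to derive, symmetrically, $\text{R}(\beta^*)\leq x_{\beta^*}+(1-p)$, hence $x_{\beta^*}\geq x^{1-p}_{**}(E)$. The main obstacle is topological rather than combinatorial: the whole argument rests on closedness of $\text{C}_x(\Delta(T))$ and on the fixed-point property $\text{B}_x\bigl(\text{C}_x(\Delta(T))\bigr)=\text{C}_x(\Delta(T))$, for which one needs Portmanteau upper semicontinuity and continuity of probability along decreasing sequences. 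A minor point of care in the noninvestment case is that $\text{R}$ uses the weak inequality $\leq$, so if $\text{F}_{\beta^*}$ has an atom at $x_{\beta^*}$ one should either select $\beta^*$ to minimise that atom mass among minimisers of $x$ or run a short limiting argument; once those technicalities are handled, everything reduces to the two-line measure-theoretic estimate above.
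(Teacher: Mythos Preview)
Your proposal is correct and follows essentially the same route as the paper: both proofs reduce to showing $\beta\notin\text{C}_x(\Delta(T))$, establish closedness of $\text{C}_x(\Delta(T))$ via Portmanteau, take a maximizer $\beta^*$ of $x$ on $E\cap\text{C}_x(\Delta(T))$, combine the fixed-point inequality $\text{F}_{\beta^*}(\text{C}_x(\Delta(T)))\geq x_{\beta^*}$ with $\text{F}_{\beta^*}(E)\geq p$, and use the maximizer property to obtain $x_{\beta^*}\leq\text{R}(\beta^*)+(1-p)$. Your inclusion--exclusion step is just a one-line rearrangement of the paper's decomposition $\text{F}_{\beta^*}(\text{C}_x(\Delta(T)))\leq\text{F}_{\beta^*}(\text{C}_x(\Delta(T))\cap E)+(1-p)$, and your monotone-convergence justification of the fixed-point property is slightly more explicit than the paper's bare assertion that $\text{B}_x(\text{C}_x(\Delta(T)))=\text{C}_x(\Delta(T))$; your observation about a possible atom at $x_{\beta^*}$ in the dual case is a genuine point of care that the paper does not address, since it only writes out the investment case.
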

\begin{proof}Again, we do the first case. By the last lemma, we have to show that the condition implies $\beta\notin\text{C}_{x}\big(\Delta(T)\big)$. This is trivially the case if $\text{C}_{x}\big(\Delta(T)\big)\cap E=\emptyset$, so we can assume that $\text{C}_{x}\big(\Delta(T)\big)\cap E\neq\emptyset$. We show that $\big\{x_\beta\mid \beta\in \text{C}_{x}\big(\Delta(T)\big)\cap E\big\}$ has a maximum. Since the function $\beta\mapsto x_\beta$ is continuous and $\Delta(T)$ compact, it suffices to show that $\text{C}_x(H)$ is closed if $H$ is, which reduces to showing that $\text{B}_x(H)$ is closed if $H$ is. This, in turn, reduces to showing that $\beta\mapsto \text{F}_\beta(H)$ is upper-semicontinuous if $H$ is closed, which follows from the Portmanteau theorem.

So let $\hat{\beta}$ be a maximizer of $x$ on $\text{C}_{x}\big(\Delta(T)\big)\cap E$ and $\hat{x}=x_{\hat{\beta}}$ the corresponding value. By construction, if $x_\beta>\hat{x}$, then $\beta\notin \text{C}_{x}\big(\Delta(T)\big)\cap E$. If we know that $\beta\in E$, this means $\beta\notin \text{C}_{x}\big(\Delta(T)\big)$ and investment is uniquely rationalizable at $\beta$. It suffices, therefore, to show that $x^{**}_{1-p}(E)\geq\hat{x}$.

Moreover,
\[\text{F}_{\hat{\beta}}\Big(\text{C}_x\big(\Delta(T)\big)\Big)=\text{F}_{\hat{\beta}}\Big(\text{C}_x\big(\Delta(T)\big)\cap E\Big)+\text{F}_{\hat{\beta}}\Big(\text{C}_x\big(\Delta(T)\big)\setminus E\big)\Big)\]
\[\leq\text{F}_{\hat{\beta}}\Big(\text{C}_x\big(\Delta(T)\big)\cap E\Big)+\text{F}_{\hat{\beta}}\big(\Delta(T)\setminus E\big)\]
\[=\text{F}_{\hat{\beta}}\Big(\text{C}_x\big(\Delta(T)\big)\cap E\Big)+\big(1-\text{F}_{\hat{\beta}}(E)\big)\]
\[\leq \text{F}_{\hat{\beta}}\Big(\text{C}_x\big(\Delta(T)\big)\cap E\Big)+(1-p),\]
where the last inequality uses that $\hat{\beta}\in E$ and $E\subseteq B_p(E)$, which implies $\text{F}_{\hat{\beta}}(E)\geq p$. Now, using the fact that $\hat{\beta}$ is a maximizer, we get
\[\text{F}_{\hat{\beta}}\Big(\text{C}_x\big(\Delta(T)\big)\cap E\Big)+(1-p)\leq\text{F}_{\hat{\beta}}\Big(\big\{\beta\in\Delta(T)\mid x_\beta\leq \hat{x}\big\}\Big)+(1-p)\]
\[=\text{R}(\hat{\beta})+(1-p).\]

Note that $\text{B}_x\big(\text{C}_x\big(\Delta(T)\big)\big)=\text{C}_x\big(\Delta(T)\big)$. Since $\hat{\beta}\in \text{C}_x\big(\Delta(T)\big)$ and by the above inequalities:
\[\hat{x}=x_{\hat{\beta}}\leq \text{R}(\hat{\beta})+(1-p).\]
We hence have:
\[\hat{x}\in\big\{x_\beta\mid\beta\in \Delta(X), x_\beta\leq \text{R}(\beta)+(1-p)\big\},\]

which implies
\[\hat{x}\leq\sup\big\{x_\beta\mid\beta\in E, x_\beta\leq \text{R}(\beta)+(1-p)\big\}=x^{**}_{1-p}(E).\]

\end{proof}

\begin{proposition}
If $C_{1-\epsilon}(URB_\epsilon)$ is closed, then investing is uniquely rationalizable for all 
\[\beta\in SRD_{2\epsilon}\cap C_{1-\epsilon}(URB_\epsilon).\]

\noindent If $C_{1-\epsilon}(URB_\epsilon)$ is closed, then not investing is uniquely rationalizable for all 
\[\beta\in nSRD_{2\epsilon}\cap C_{1-\epsilon}(URB_\epsilon).\]
\end{proposition}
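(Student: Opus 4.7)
The plan is to invoke Lemma \ref{investstars} with $E=\text{C}_{1-\epsilon}(\text{URB}_\epsilon)$ and $p=1-\epsilon$, so that $1-p=\epsilon$. Closedness of $E$ is given by hypothesis, so the remaining preconditions are (a) $E\subseteq \text{B}_{1-\epsilon}(E)$ and (b) a useable bound on $x^{**}_\epsilon(E)$ (resp.\ $x_{**}^{\epsilon}(E)$).

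For (a), I would argue that $E$ is actually a fixed point of $\text{B}_{1-\epsilon}$. The operator $\text{B}_f$ is monotone (if $E\subseteq E'$ then $\text{F}_\beta(E)\leq \text{F}_\beta(E')$, so the defining inequality is preserved) and contracting ($\text{B}_f(E)\subseteq E$). Hence the iterates $\text{B}_f^n(\text{URB}_\epsilon)$ form a decreasing sequence with intersection $E$. For any $\beta\in E$ we have $\text{F}_\beta(\text{B}_f^n(\text{URB}_\epsilon))\geq f(\beta)$ at every stage; by continuity of measure from above applied to $f=1-\epsilon$, the limit gives $\text{F}_\beta(E)\geq 1-\epsilon$, so $\beta\in \text{B}_{1-\epsilon}(E)$. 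This yields $E\subseteq \text{B}_{1-\epsilon}(E)$ (and in fact equality).

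For (b), I would use $E\subseteq \text{URB}_\epsilon$ to bound ranks inside $E$. If $\beta'\in E$ and $x_{\beta'}\leq \text{R}(\beta')+\epsilon$, then since $\text{R}(\beta')\leq 1/2+\epsilon$ we have $x_{\beta'}\leq 1/2+2\epsilon$, and therefore $x^{**}_\epsilon(E)\leq 1/2+2\epsilon$. For any $\beta\in \text{SRD}_{2\epsilon}\cap E$ we have $x_\beta>1/2+2\epsilon\geq x^{**}_\epsilon(E)$, and Lemma \ref{investstars} delivers unique rationalizability of investing. The noninvestment case is symmetric: for $\beta'\in E$ with $x_{\beta'}\geq \text{R}(\beta')-\epsilon$, the inequality $\text{R}(\beta')\geq 1/2-\epsilon$ (again from $E\subseteq \text{URB}_\epsilon$) gives $x_{\beta'}\geq 1/2-2\epsilon$, hence $x_{**}^\epsilon(E)\geq 1/2-2\epsilon>x_\beta$ for $\beta\in \text{nSRD}_{2\epsilon}\cap E$, and Lemma \ref{investstars} again applies.

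The only genuinely subtle step is (a): one has to check that the certainty operator is actually a fixed point of the belief operator, which requires measurability of all the iterates $\text{B}_{1-\epsilon}^n(\text{URB}_\epsilon)$ so that continuity of measure from above can be used. Provided $\text{URB}_\epsilon$ is measurable (which follows from measurability of $\text{R}$ and of $\beta\mapsto x_\beta$) and $\beta\mapsto \text{F}_\beta(H)$ is measurable for each measurable $H$, monotone induction keeps every iterate Borel and the argument goes through. Everything else is a bookkeeping comparison of the two numbers $1/2\pm 2\epsilon$ with $x^{**}_\epsilon(E)$ and $x_{**}^\epsilon(E)$.
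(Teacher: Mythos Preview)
Your proof is correct and follows the same route as the paper: apply Lemma~\ref{investstars} with $E=\text{C}_{1-\epsilon}(\text{URB}_\epsilon)$ and $p=1-\epsilon$, use $E\subseteq\text{URB}_\epsilon$ to bound $x^{**}_\epsilon(E)\le 1/2+2\epsilon$ (and dually $x_{**}^\epsilon(E)\ge 1/2-2\epsilon$), and compare with the $\text{SRD}_{2\epsilon}$/$\text{nSRD}_{2\epsilon}$ thresholds. The only difference is that the paper simply asserts $\text{C}_{1-\epsilon}(\text{URB}_\epsilon)=\text{B}_{1-\epsilon}\big(\text{C}_{1-\epsilon}(\text{URB}_\epsilon)\big)$, whereas you supply the continuity-from-above argument establishing it.
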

\begin{proof}
Since, $C_{1-\epsilon}(URB_\epsilon)= B_{1-\epsilon}(C_{1-\epsilon}(URB_\epsilon))$, by Lemma \ref{investstars}, it suffices to show that $x_\beta>x^{**}_\epsilon\big(C_{1-\epsilon}(URB_\epsilon)\big)$ for all $\beta$ in the relevant intersection. 

By definition, $C_{1-\epsilon}(URB_\epsilon)\subseteq URB_\epsilon$. Therefore,
\[ x^{**}_\epsilon(C_{1-\epsilon}(URB_\epsilon))\leq x^{**}_\epsilon(URB_\epsilon)=\sup\{x_\beta\mid\beta\in URB_\epsilon, x_\beta\leq \text{R}(\beta)+\epsilon\}\leq 1/2+2\epsilon.\]
Now,
\[\text{SRD}_{2\epsilon}=\big\{\beta\in\Delta(T)\mid x_\beta>1/2+2\epsilon\big\},\]
so
\[\beta\in SRD_{2\epsilon}\cap C_{1-\epsilon}(URB_\epsilon)\]
implies $x_\beta>x^{**}_\epsilon\big(C_{1-\epsilon}(URB_\epsilon)\big)$ and we are done.
\end{proof}

\begin{remark} As mentioned above, we have defined belief and certainty operators on $\Delta(T)$ instead of $T$. In our large game model, there is no difference between a player's belief about what all players believe and what all other players believe. In particular, the model doesn't directly express introspective beliefs; beliefs about a player's own beliefs. But it is exactly such introspective beliefs that occur in the relevant definition. For example, rank beliefs are about how a player ranks the optimism of their own beliefs relative to the rest of the population.
\end{remark}

\section{Discussion}

In this paper, we have formulated large games of incomplete information and a suitable notion of interim correlated rationalizability, showed that the model is not overly restrictive by constructing the canonical type space, and showed that under strategic complementarities, all rationalizable outcomes are bracketed by extremal equilibria. As a byproduct, we have an existence result for pure-strategy equilibria in large games with nontrivial aggregate uncertainty and strategic complementarities. The model is flexible enough to subsume traditional approaches with a common prior, as we do in our coordinated attack problem, and to adapt interim arguments in terms of beliefs, as we do in our adaption of the uniqueness result of \citet*{MorrisShinYildez2016}. In this section, we want to discuss some technical aspects of our approach as well as some wider applications.

In our formulation, all players are implicitly negligible and small. There are applications in which it is natural to have large players such as governments and central banks, alongside the small players. Including such large players in the model would require a lot more notation, but would not give rise to any technical problems. We can embed a large player as an isolated point in the space of characteristics $C$. If $c\in C$ represents a large player, an element $\mu\in\Delta_\nu(C\times A)$ that represents a feasible joint distribution over types and actions has to satisfy $\nu(c)=\mu(c, a)$ for some $a\in A$ (abusing notation, we identify singletons and their elements here). To show that such conditions give rise to a closed subset, one can adapt the proof of \citet[Proposition 7.2]{LEVY2024103012}. For the construction of the canonical type space, one would need to impose a familiar introspection condition for large players.

We have assumed that $A$ is compact to ensure that $\Delta_\nu(C\times A)$ is compact. However, we only need that the elements of $\Delta_\nu(C\times A)$ supported on the graph of $\mathcal{A}$ are compact. If $A$ is merely Polish but $\mathcal{A}$ compact-valued, this holds by Prohorov's characterization of relative compactness. If $K\subseteq C$ is a compact set such that $\nu(K)>1-\epsilon$, then $K\times\mathcal{A}(K)$ is compact, because $\mathcal{A}$ is upper hemicontinuous and compact-valued, and $\mu(K\times\mathcal{A}(K))>1-\epsilon$ for all $\mu\in\Delta_\nu(C\times A)$ supported on the graph of $\mathcal{A}$

A more substantive restriction of our model is that it is entirely static. Global games have been extended to dynamic models such as the dynamic regime change model of \citet*{angeletos2007dynamic}. Extending our work to a dynamic setting would bring new technical and conceptual challenges.  On the conceptual side, dynamic versions of rationalizability usually incorporate some form of forward or backward induction. On the technical side, dynamic games make it impossible to ignore probability-zero events completely. The existence of extremal equilibria in dynamic supermodular games with purely idiosyncratic uncertainty has been proven by \citet*{BDRWTE2022}. Our approach allows for a much more flexible treatment of beliefs.

An important ingredient of our model is our model of beliefs, and we think it should be useful in wider settings. For example, \citet{morris1995justifying} and \citet{BENPORATH20112608} provide explicit foundations in terms of beliefs for competitive equilibria in economies with incomplete information. In particular, \citet{BENPORATH20112608} study a competitive model with a continuum of agents. However, they use type spaces with a product structure, and this, together with the requirement of measurability, imposes economically overly restrictive type-symmetry conditions on beliefs. Our framework allows for a much more flexible treatment of beliefs.
 
There is an extensive literature on the robustness of equilibria to perturbations of higher-order beliefs; \citet{kajii1997robustness} introduced the general program with ex-ante perturbations and \citet{weinstein2007structure} studied interim perturbations. In our large player setting, perturbations can relate jointly to players and hierarchies of beliefs; topologies on individual hierarchies of beliefs are not sufficient to study perturbations of profiles of hierarchies of beliefs.

\section{Appendix 1: Stochastic orderings}

Let $X$ be a Polish space and $\succeq$ be a partial order on $X$ with a closed graph. We can identify $X$ with a closed subspace of $\Delta(X)$ by the embedding $x\mapsto\delta_x$. We can extend $\succeq$ to all of $\Delta(X)$ by letting $\mu\succeq\mu'$ if $\int g~\mathrm d\mu\geq \int g~\mathrm d\mu'$ for every bounded measurable $\succeq$-nondecreasing function $g:X\to\mathbb{R}$. The following are equivalent:
\begin{enumerate}
\item $\mu\succeq \mu'$.
\item There exists $\lambda\in\Delta(X\times X)$ supported on the graph of $\succeq$ such that the first marginal equals $\mu$ and the second marginal equals $\mu'$.
\item $\mu(U)\geq\mu'(U)$ for every closed set $U\subseteq X$ that contains with every element also all $\succeq$-larger elements.
\item There exists a probability space and random variables $m$ and $m'$ defined on it with values in $X$ such that $m$ has distribution $\mu$, $m'$ has distribution $\mu'$ and $m\succeq m'$ holds almost surely.
\end{enumerate}
For a proof, see \citet*[Theorem 1]{MR0494447}. The $\succeq$ relation is indeed a partial order on all of $\Delta(X)$; see \citet[Theorem 2]{MR512419} and \citet*[Proposition 3]{MR0494447}.

If $\succeq$ is an order on $X$, there is a natural induced partial order on $C\times X$ such that, abusing notation again, $(c,x)\succeq (c',x')$ holds exactly when $c=c'$ and $x\succeq x'$. We will use this order to define stochastic orders on $\Delta_\nu(C\times X)$.

\section{Appendix 2: Proofs}

\begin{proof}[Proof of Lemma \ref{Duhc}]We first show that the set
\[\begin{split}
D=\Big\{\kappa\in\,\Delta_\nu\big(C\times \Delta(T)\times A\big)~\Bigl\lvert~ &\kappa\textnormal{ is supported on the graph of } \mc{S}\Big\}
\end{split}\]
is closed. Now $\Gamma(\mc{S})$, the graph of $\mc{S}$, is closed by the closed graph theorem. Let $d$ be a bounded metric that metrizes $C\times \Delta(T)\times A$. Then
\[\Gamma(\mc{S})=\Big\{(c,\beta,a)\mid d\big((c,\beta,a),\Gamma(\mc{S})\big)=0\Big\}.\]
It follows that \[\begin{split}
D=\Big\{\kappa\in\,\Delta_\nu\big(C\times \Delta(T)\times A\big)~\Bigl\lvert~ & \int d\big(\cdot,\Gamma(\mc{S})\big)~\mathrm d\kappa=0\Big\}
\end{split}\]
and this set is closed as the preimage of $\{0\}$ under a continuous function. Now the graph of $\mathcal{D}_\mc{S}$ is
\[\big\{(t,\kappa)\in T\times D\mid \kappa\textnormal{ has }C\times\Delta(T)\textnormal{-marginal }\tau(t)\big\}.\]
This graph is closed since both the ``marginal-mapping'' and $\tau$ are continuous functions. It follows that $\mc{D}_\mc{S}$ is upper hemicontinuous with compact values since $T\times\Delta_\nu\big(C\times \Delta(T)\times A\big)$ is compact. That all values are nonempty is obvious.
\end{proof}

\begin{proof}[Proof of Lemma \ref{Sdef}]First note that the set
\[\begin{split}
\bigg\{(c,a,\beta,\mu)\in\, & C\times A\times\Delta(T)\times\Delta\left(T\times\Delta_\nu\big(C\times \Delta(T)\times A\big)\right)\Bigl\lvert\\  &a\in\argmax_{\mc{A}(c)} \int v\big(c,a,\sigma(t),\phi(\kappa)\big)~\mathrm d\mu(t,\kappa),\\
&\mu\textnormal{ has }T\textnormal{-marginal }\beta.
\bigg\}
\end{split}\]
is closed because expected utility is jointly continuous (see for example \citet*{MR2221405}) and the marginal function is continuous. This set is the graph of a correspondence from
$C\times\Delta(T)$ to the compact set $A\times \Delta\big(T\times\Delta_\nu(C\times \Delta(T)\times A)\big)$ and this correspondence is upper hemicontinuous by the closed graph theorem. Composing with the projection of $A\times\Delta\big(T\times\Delta_\nu(C\times \Delta(T)\times A)\big)$ onto $A$ gives us the correspondence $\mc{S}_\mc{D}$ which is therefore upper hemicontinuos by \citet[Theorem 17.23]{MR2378491}. Since forward images of compact sets under continuous functions are compact, $\mc{S}_\mc{D}$ is compact valued.
\end{proof}

\begin{proof}[Proof of Theorem \ref{selfrat}]Let $a\in\mathcal{S}(c,\beta)$. For each positive $m$, there exists $\mu_m\in\Delta\big(T\times\Delta_\nu(C\times \Delta(T)\times A)\big)$ with $T$-marginal $\beta$, supported on the graph of $\mc{D}_{m-1}$ such that $a\in\argmax_{\mathcal{A}(c)} \int v\big(c,a,\sigma(t),\phi(\kappa)\big)~\mathrm d\mu(t,\kappa).$ Since $\Delta\big(T\times\Delta_\nu(C\times \Delta(T)\times A)\big)$ is compact, there exists a subsequence converging to some $\mu$. We claim that $\mu$ is supported on the graph of $\mc{D}_\mc{S}$ and that \[a\in\argmax_{\mathcal{A}(c)} \int v\big(c,a,\sigma(t),\phi(\kappa)\big)~\mathrm{d}\mu(t,\kappa).\] Indeed, the latter part follows again from the joint continuity of expected utility.

To see that $\mu$ is supported on the graph of $\mc{D}_\mc{S}$, we note that we only have to check that the $\Delta_\nu(C\times \Delta(T)\times A)\big)$-marginal is supported on the graph of $\mc{S}$. Let $O_m$ be the complement of the graph of $\mathcal{S}_m$. Let $\mu_m^*$ and $\mu^*$ be the $\Delta_\nu(C\times\Delta(T)\times A)$-marginal of $\mu_m$ and $\mu$ respectively. Clearly, $\mu_n^*(O_m)=0$ for $n\geq m$. Therefore, $0=\liminf_n \mu_n^*(O_m)\geq \mu^*(O_m)=0$ by the Portmanteua theorem. Now $\bigcup_m O_m$ is exactly the complement of the graph of $\mathcal{S}$ and by countable subadditivity, $\mu^*(\bigcup_m O_m)=0$.

Now, if $\mathcal{R}$ rationalizes itself, it follows from a simple induction argument that $\mathcal{R}(c,\beta)\subseteq\mathcal{S}_m(c,\beta)$ for all $(c,\beta)$, so $\mathcal{S}$ is the largest correspondence that rationalizes itself.
\end{proof}

For the proof of Theorem \ref{equilibrium}, we need some preliminary work. We define $V:C\times A\times\Delta\big(S\times \Delta_\nu(C\times A)\big) \to\mathbb{R}$ by
\[V(c,a,\kappa)=\int v(c,a,s,\mu)~\mathrm d\kappa(s,\mu).\]

\begin{lemma}\label{incdiff}The function $V$ is supermodular in $A$ and has increasing differences in $A\times\Delta\big(S\times \Delta_\nu(C\times A)\big)$.
\end{lemma}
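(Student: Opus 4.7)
The plan is to reduce both properties to the assumed properties of $v$ by linearity of the integral and the basic characterization of stochastic orders recalled in Appendix~1.

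\textbf{Supermodularity in $A$.} Fix $c\in C$, $a,a'\in A$, and $\kappa\in\Delta\big(S\times\Delta_\nu(C\times A)\big)$. Assumption (ii) gives, for every $(s,\mu)$,
\[v(c,a\vee a',s,\mu)+v(c,a\wedge a',s,\mu)\geq v(c,a,s,\mu)+v(c,a',s,\mu).\]
Integrating this pointwise inequality against $\kappa$ (and using that $v$ is continuous, hence bounded on the compact set $\{c\}\times A\times S\times\Delta_\nu(C\times A)$, so all integrals are finite) yields $V(c,a\vee a',\kappa)+V(c,a\wedge a',\kappa)\geq V(c,a,\kappa)+V(c,a',\kappa)$.

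\textbf{Increasing differences.} Fix $c\in C$ and $a\succeq a'$. Define
\[g_{c,a,a'}(s,\mu)=v(c,a,s,\mu)-v(c,a',s,\mu).\]
This is continuous on the compact space $S\times\Delta_\nu(C\times A)$, hence bounded and Borel measurable. The natural product order on $S\times\Delta_\nu(C\times A)$, following the convention in Appendix~1, declares $(s,\mu)\succeq(s',\mu')$ iff $s=s'$ and $\mu\succeq\mu'$. For this order, $g_{c,a,a'}$ is nondecreasing: if $s=s'$ and $\mu\succeq\mu'$, then assumption (iii) gives $g_{c,a,a'}(s,\mu)\geq g_{c,a,a'}(s,\mu')$. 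Now let $\kappa\succeq\kappa'$ in the induced stochastic order on $\Delta\big(S\times\Delta_\nu(C\times A)\big)$. By the equivalence in Appendix~1 (bounded measurable nondecreasing test functions characterize the stochastic order),
\[V(c,a,\kappa)-V(c,a',\kappa)=\int g_{c,a,a'}\,\mathrm d\kappa\geq\int g_{c,a,a'}\,\mathrm d\kappa'=V(c,a,\kappa')-V(c,a',\kappa'),\]
which is exactly the increasing differences property of $V$ in $A\times\Delta\big(S\times\Delta_\nu(C\times A)\big)$.

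The only delicate point is the second part: one must be careful that the stochastic order being used on $\Delta\big(S\times\Delta_\nu(C\times A)\big)$ is the one induced by the product order described above, and that $g_{c,a,a'}$ is measurable and bounded so that Appendix~1 applies. Both are immediate from continuity of $v$ together with compactness of $A\times S\times\Delta_\nu(C\times A)$ for each fixed $c$. Supermodularity is essentially a one-line integration and presents no obstacle.
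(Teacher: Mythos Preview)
Your argument is correct. Supermodularity is handled identically to the paper (``straightforward'' pointwise integration). For increasing differences, you and the paper both reduce to Appendix~1, but invoke different equivalent characterizations of the stochastic order: you use the defining test-function condition directly, observing that $g_{c,a,a'}$ is bounded, measurable, and $\succeq$-nondecreasing, so $\kappa\succeq\kappa'$ immediately gives $\int g\,\mathrm d\kappa\geq\int g\,\mathrm d\kappa'$; the paper instead uses the Strassen coupling (item~4 in Appendix~1), producing random variables $k\succeq l$ a.s.\ with laws $\kappa,\kappa'$, applying assumption~(iii) pointwise, and then integrating. Your route is slightly more direct since it avoids the detour through the coupling and the change-of-variables formula; the paper's route makes the pointwise use of assumption~(iii) a bit more visible. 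Both are one-line applications of Appendix~1 once the right order on $S\times\Delta_\nu(C\times A)$ is identified, which you do correctly.
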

\begin{proof}Verifying that $V$ is supermodular in $A$ is straightforward. For the rest, assume that $a\succeq a'$ and $\kappa\succeq\lambda$. By the characterization of stochastic dominance in Appendix 1, there exists a probability space and random variables $k$ and $l$ defined on it with values in $S\times\Delta_\nu(C\times A)$ such that $k\succeq l$ almost surely and such that the distributions of $k$ and $l$ are $\kappa$ and $\lambda$, respectively. By assumption, we have
\[v(c,a,k)-v(c,a',k)\geq v(c,a,l)-v(c,a',l)\]
almost surely. By the change of variables formula for push-forward measures, the integral of $v(c,a,k)-v(c,a',k)$ is $V(c,a,\kappa)-V(c,a,\kappa)$ and the integral of $v(c,a,l)-v(c,a',l)$ is $V(c,a,\lambda)-V(c,a',\lambda)$, so the result follows from integrating the inequality $v(c,a,k)-v(c,a',k)\geq v(c,a,l)-v(c,a',l)$.
\end{proof}



\begin{lemma}\label{extrselect}For all $m$, $\bigvee\mc{S}_m$ and $\bigwedge\mc{S}_m$ are measurable selections of $\mc{S}_m$
\end{lemma}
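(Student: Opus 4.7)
The plan is to prove the lemma by induction on $m$, with the hypothesis that $\bar{s}_m := \bigvee\mc{S}_m$ and $\underline{s}_m := \bigwedge\mc{S}_m$ are measurable selections of $\mc{S}_m$. The base case $m=0$ is immediate: by assumption (i), $\mc{A}(c)$ is a compact sublattice of the compact topological lattice $A$, so it contains its sup and inf, and the measurability in $c$ of these extrema follows from upper hemicontinuity of $\mc{A}$ together with continuity of $\vee$ and $\wedge$. The argument for $\bigwedge\mc{S}_{m+1}$ being dual to that for $\bigvee\mc{S}_{m+1}$ throughout, I will only spell out the latter.

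For the inductive step, assume $\bar{s}_m$ is a measurable selection of $\mc{S}_m$. Following the Van Zandt-style monotone-argmax strategy, I would define at each $(c,\beta)$ the belief $\mu^*_{c,\beta}\in\Delta\big(T\times\Delta_\nu(C\times\Delta(T)\times A)\big)$ as the pushforward of $\beta$ under $t\mapsto\big(t,\bar{s}_m^*(\tau(t))\big)$, using the $\zeta^*$ notation of the paper. By construction $\mu^*_{c,\beta}$ has $T$-marginal $\beta$ and is supported on the graph of $\mc{D}_m$, since $\bar{s}_m^*(\tau(t))$ has $C\times\Delta(T)$-marginal $\tau(t)$ and is supported on $\Gr(\bar{s}_m)\subseteq\Gr(\mc{S}_m)$. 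Hence $\mu^*_{c,\beta}$ is admissible in the definition of $\mc{S}_{m+1}(c,\beta)$, and the supermodularity in $A$ from Lemma~\ref{incdiff} makes $B^*:=\argmax_{\mc{A}(c)}V\big(c,\cdot,(\sigma,\phi)_*\mu^*_{c,\beta}\big)$ a nonempty compact sublattice that therefore contains its maximum $a^*(c,\beta)\in\mc{S}_{m+1}(c,\beta)$.

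The crucial step is identifying $a^*(c,\beta)$ with $\bigvee\mc{S}_{m+1}(c,\beta)$. For this I would establish $(\sigma,\phi)_*\mu^*_{c,\beta}\succeq(\sigma,\phi)_*\mu'$ in the stochastic order on $\Delta\big(S\times\Delta_\nu(C\times A)\big)$ derived from the trivial order on $S$ and the lattice order on $A$, for every other admissible $\mu'$. The natural coupling uses a common draw $t\sim\beta$ followed by, conditional on $t$, a common draw $(c',\beta')\sim\tau(t)$: in this coupling the $S$-coordinates agree, while the $A$-components pair $\bar{s}_m(c',\beta')$ with any $a'\in\mc{S}_m(c',\beta')$ drawn from the $(c',\beta')$-conditional of $\kappa'$, and $a'\preceq\bar{s}_m(c',\beta')$ because $\bar{s}_m$ is a selection. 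The characterization of stochastic dominance from Appendix~1 then delivers the desired dominance. Combined with the increasing-differences property of $V$ from Lemma~\ref{incdiff}, Topkis's monotone-argmax theorem yields $a^*(c,\beta)\succeq\max\argmax_{\mc{A}(c)}V(c,\cdot,(\sigma,\phi)_*\mu')$ for every admissible $\mu'$; since every element of $\mc{S}_{m+1}(c,\beta)$ lies in some such argmax, $a^*(c,\beta)$ is indeed the pointwise sup.

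Measurability of $\bar{s}_{m+1}$ then follows from measurable dependence of $\mu^*_{c,\beta}$ on $(c,\beta)$ (inherited from $\bar{s}_m$), joint continuity of expected utility from the proof of Lemma~\ref{Sdef}, and a Berge-type upper semicontinuity of the max-of-argmax in the belief parameter. The main obstacle I expect is the coupling/stochastic-dominance step: showing rigorously that $\mu^*_{c,\beta}$ dominates every admissible $\mu'$ requires careful handling of the disintegrations on $T\times\Delta_\nu(C\times\Delta(T)\times A)$, of the product order on $S\times\Delta_\nu(C\times A)$, and of preservation of dominance under $\phi$, before Lemma~\ref{incdiff} and Topkis's theorem can be brought to bear.
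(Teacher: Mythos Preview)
Your proposal is correct and follows essentially the same route as the paper: induction on $m$, pushing $\beta$ forward through the extremal selection $\bar{s}_m^*\circ\tau$ to obtain a dominating admissible belief, establishing stochastic dominance over every other admissible $\mu'$ (you via coupling, the paper via the increasing-function test and disintegration $r_\mu$---equivalent by Appendix~1), and then invoking Lemma~\ref{incdiff} together with Topkis to conclude that the supremum of the argmax at this dominating belief is $\bigvee\mc{S}_{m+1}$. The paper packages the measurability of the pointwise supremum by citing \citet[Lemma~19]{VanZandt2010JET} rather than arguing directly, but the substance is the same.
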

\begin{proof}We do the case of $\bigvee \mc{S}_m$ by induction. Note first that for every measurable correspondence $W$ whose values are nonempty compact sublattices of $A$, the function $\bigvee W$ is a measurable selection of $W$, see \citet[Lemma 19]{VanZandt2010JET}.

So $\bigvee\mc{S}_0$ is a measurable selection of $\mc{S}_0$ since the values of $\mc{A}$ are compact sublattices of $A$. Now assume that $\bigvee\mc{S}_m$ is a measurable selection of $\mc{S}_m$. Let $(c,\beta)\in C\times\Delta(T)$ and let $\mu\in\Delta\big(T\times\Delta_\nu(C\times\Delta(T)\times A)\big)$ have $T$-marginal $\beta$ and be supported on the graph of $\mc{D}_m$. We show that $\mu\circ(\sigma,\phi)^{-1}\in\Delta\big(S\times\Delta_\nu(C\times A)\big)$ satisfies \[\beta\circ\bigg(\sigma,\phi\circ{\bigvee\mc{S}_m}^*\circ\tau\bigg)^{-1}\succeq\mu\circ (\sigma,\phi)^{-1}.\]
So let $g:\Delta\big(S\times\Delta_\nu(C\times A)\big)\to\mathbb{R}$ be a bounded measurable function increasing in $\Delta_\nu(C\times A)$. Let $r_\mu:T\to\Delta\big(\Delta_\nu(C\times\Delta(T)\times A)\big)$ be a regular conditional probability for $\mu$. We have
\[\int g(s,\kappa)~\mathrm d\beta\circ\bigg(\sigma,\phi\circ{\bigvee\mc{S}_m}^*\circ\tau\bigg)^{-1}=\int g\bigg(\sigma(t),\phi\Big({\bigvee{\mc{S}_m}^*}\big(\tau(t)\big)\Big)\bigg)~\mathrm d\beta(t)\]
and
\[\int g(s,\kappa)~\mathrm d\mu\circ(\sigma,\phi)^{-1}=\int g\big(\sigma(t),\phi(\kappa)\big)~\mathrm d\mu(t,\kappa)\]
\[=\int\int g\big(\sigma(t),\phi(\kappa)\big)~\mathrm d r_\mu~\mathrm d\beta.\]
Since $\mu$ is supported on the graph of $\mc{D}_m$, $r_\mu(t)$ must be supported on the $t$-section of this graph for $\beta$-almost all $t$. It follows that $r_\mu(t)$ assigns probability one to elements of $\Delta_\nu(C\times\Delta(T)\times A)$ with $C\times\Delta(T)$-marginal $\tau(t)$ that are supported on $\mc{S}_m(\tau(t))$ for $\beta$-almost all $t$. But $\bigvee\mc{S}_m(\tau(t))$ is at least as large, so
\[{\bigvee\mc{S}_m}^*\circ\tau(t)\succeq\kappa\]
for $r_\mu(t)$ almost all $\kappa$. It follows then by integration and the appropriate monotonicity properties of $\phi$ and $g$ that
\[\int g(s,\kappa)~\mathrm d\beta\circ\bigg(\sigma,\phi\circ{\bigvee\mc{S}_m}^*\circ\tau\bigg)^{-1}\geq\int g(s,\kappa)~\mathrm d\mu\circ(\sigma,\phi)^{-1}.\]
By Lemma \ref{incdiff} and \citet[Corollary 4.1]{Topkis78},
\[\argmax_{\mc{A}(c)}V\Bigg(c,\cdot,\beta\circ\bigg(\sigma,\phi\circ{\bigvee\mc{S}_m}^*\circ\tau\bigg)^{-1}\Bigg)\]
is a complete sublattice and clearly compact. By \citet[Lemma 19]{VanZandt2010JET}, its supremum, which is a measurable selection and therefore also a measurable selection of $\mc{S}_{m+1}$, is larger than every element of $\mc{S}_{m+1}$. But this means it must equal $\bigvee\mc{S}_{m+1}$ and $\bigvee\mc{S}_{m+1}$ is a measurable selection of $\mc{S}_{m+1}$.
\end{proof}

\begin{proof}[Proof of Theorem \ref{equilibrium}]We do the first case. We first show that $\bigvee\mc{S}$ is a measurable selection of $\mc{S}$. Indeed, since $\langle\mc{S}_m\rangle$ is a decreasing sequence of correspondences with nonempty and compact values, and since $\bigvee\mc{S}_m$ is a selection of $\mc{S}_m$ for each $m$ by Lemma \ref{extrselect}, $\lim_{m\to\infty}\bigvee\mc{S}_m(c,\beta)$ exists for all $(c,\beta)$ and is an element of $\mc{S}(c,\beta)$. Since $\bigvee\mc{S}_m(c,\beta)$ is an upper bound of $\mc{S}(c,\beta)$ for all $m$ and the order $\succeq$ on $A$ has a closed graph, $\bigvee\mc{S}(c,\beta)=\lim_{m\to\infty}\bigvee\mc{S}_m(c,\beta)$. It now follows that $\bigvee\mc{S}$ is measurable as the pointwise limit of a sequence of measurable functions. So $\bigvee\mc{S}$ is a measurable selection of $\mc{S}$.

It remains to show that the strategy profile $\bigvee\mc{S}$ is a best response to itself. Since $\mc{S}$ rationalizes itself by Theorem \ref{selfrat}, there exists for each $(c,\beta)$ some $\mu\in\Delta\big(T\times\Delta_\nu(C\times \Delta(T)\times A)\big)$ with $T$-marginal $\beta$ and supported on the graph of $\mc{D}_\mc{S}$ such that
\[\bigvee\mc{S}(c,\beta)\in\argmax_{\mathcal{A}(c)} \int v\big(c,\cdot,\sigma(t),\phi(\kappa)\big)~\mathrm d\mu(t,\kappa)
.\]
We show that $\mu\circ(\sigma,\phi)^{-1}\in\Delta\big(S\times\Delta_\nu(C\times A)\big)$ satisfies \[\beta\circ\bigg(\sigma,\phi\circ{\bigvee\mc{S}}^*\circ\tau\bigg)^{-1}\succeq\mu\circ (\sigma,\phi)^{-1}.\]
%

So let $g:\Delta\big(S\times\Delta_\nu(C\times A)\big)\to\mathbb{R}$ be a bounded measurable function increasing in $\Delta_\nu(C\times A)$. Let $r_\mu:T\to\Delta\big(\Delta_\nu(C\times\Delta(T)\times A)\big)$ be a regular conditional probability for $\mu$. We have
\[\int g(s,\kappa)~\mathrm d\beta\circ\bigg(\sigma,\phi\circ{\bigvee\mc{S}}^*\circ\tau\bigg)^{-1}=\int g\bigg(\sigma(t),\phi\Big({\bigvee\mc{S}^*}\big(\tau(t)\big)\Big)\bigg)~\mathrm d\beta(t)\]
and
\[\int g(s,\kappa)~\mathrm d\mu\circ(\sigma,\phi)^{-1}=\int g\big(\sigma(t),\phi(\kappa)\big)~\mathrm d\mu(t,\kappa)\]
\[=\int\int g\big(\sigma(t),\phi(\kappa)\big)~\mathrm d r_\mu~\mathrm d\beta.\]
Since $\mu$ is supported on the graph of $\mc{D}_\mc{S}$, $r_\mu(t)$ must be supported on the $t$-section of this graph for $\beta$-almost all $t$. It follows that $r_\mu(t)$ assigns probability one to elements of $\Delta_\nu(C\times\Delta(T)\times A)$ with $C\times\Delta(T)$-marginal $\tau(t)$ that are supported on $\mc{S}(\tau(t))$ for $\beta$-almost all $t$. But $\bigvee\mc{S}(\tau(t))$ is at least as large, so
\[{\bigvee\mc{S}}^*\circ\tau(t)\succeq\kappa\]
for $r_\mu(t)$ almost all $\kappa$. It follows then by integration and the appropriate monotonicity properties of $\phi$ and $g$ that
\[\int g(s,\kappa)~\mathrm d\beta\circ\bigg(\sigma,\phi\circ{\bigvee\mc{S}}^*\circ\tau\bigg)^{-1}\geq\int g(s,\kappa)~\mathrm d\mu\circ(\sigma,\phi)^{-1}.\]
Now, for all $a\in\mc{A}(c)$,
\[\int v\big(c,a,\sigma(t),\phi(\kappa)\big)~\mathrm d\mu(t,\kappa)=\int v(c,a,\cdot)~\mathrm d\mu\circ(\sigma,\phi)^{-1}=V(c,a,\mu\circ(\sigma,\phi)^{-1})\]
by the change of variables formula for pushforward-measures. So \[\bigvee\mc{S}(c,\beta)\in\argmax_{\mathcal{A}(c)}V(c,a,\mu\circ(\sigma,\phi)^{-1}).\]
Let \[a\in\argmax_{\mathcal{A}(c)}V\Bigg(c,a,\beta\circ\bigg(\sigma,\phi\circ{\bigvee\mc{S}}^*\circ\tau\bigg)^{-1}\bigg)\Bigg).\]
So by Lemma \ref{incdiff}, \citet[Theorem 6.1]{Topkis78}, and what we have shown above,
 \[a\vee\bigvee\mc{S}(c,\beta)\in\argmax_{\mathcal{A}(c)}V\Bigg(c,a,\beta\circ\bigg(\sigma,\phi\circ{\bigvee\mc{S}}^*\circ\tau\bigg)^{-1}\bigg)\Bigg).\]
Since $\mc{S}$ rationalizes itself $a\vee\bigvee\mc{S}(c,\beta)\in\mc{S}(c,\beta)$. Therefore, $\bigvee\mc{S}(c,\beta)\succeq a\vee\bigvee\mc{S}(c,\beta)$. Since we, trivially, also have $a\vee\bigvee\mc{S}(c,\beta)\succeq\bigvee\mc{S}(c,\beta)$, we have $a\vee\bigvee\mc{S}=\bigvee\mc{S}$ and therefore
 \[\bigvee\mc{S}(c,\beta)\in\argmax_{\mathcal{A}(c)}V\Bigg(c,a,\beta\circ\bigg(\sigma,\phi\circ{\bigvee\mc{S}}^*\circ\tau\bigg)^{-1}\bigg)\Bigg),\]
which shows that $\bigvee\mc{S}$ is a Bayes-Nash equilibrium.
\end{proof}

\section{Appendix 3: The canonical type space}\label{A:Canonical}

We define coherent types first for abstract hierarchies while ignoring what these are about. We let $Z_0,Z_1,\ldots$ be a sequence of nonempty Polish spaces and write $X_n=\prod_{k=0}^n Z_k$ and $Z=\prod_{k=0}^\infty Z_k$. A nonempty product space is compact if and only if every factor space is, so $Z$ is compact if and only if each $Z_k$ is. An \emph{abstract type} is an element of $T_0=\prod_{k=1}^{\infty}\Delta(X_{k-1})$. The abstract type $\langle\delta_k\rangle_{k=1}^\infty$ is \emph{coherent} if $\delta_k=\marg_{X_{k-1}}\delta_{k+1}$ for all $k$. We write $T_1$ for the subspace of $T_1$ consisting of all coherent abstract types. The set of Borel probability measures on a Polish space is compact if and only if the underlying Polish space is. Consequently, $T_0$ is compact if and only if every $Z_k$ is.

\begin{lemma}\label{abstcoh}The function $\chi:\Delta(Z)\to T_1$ given by 
\[\chi(\delta)=\langle\marg_{X_{k-1}}\delta\rangle_{k=1}^\infty\]
is a homeomorphism.
\end{lemma}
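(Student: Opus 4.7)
The plan is to verify that $\chi$ is a continuous bijection and then establish continuity of its inverse, which is the main obstacle. Everything rests on three standard tools: the fact that pushforwards under continuous projections are weakly continuous, the Kolmogorov extension theorem for Polish spaces, and Prohorov's theorem.

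First, I would check well-definedness of $\chi$: for any $\delta\in\Delta(Z)$ and any $k$, the canonical projection $X_k\to X_{k-1}$ is the coordinate drop, so $\marg_{X_{k-1}}\delta = \marg_{X_{k-1}}\big(\marg_{X_k}\delta\big)$, showing that $\chi(\delta)$ is coherent and lands in $T_1$. Continuity of $\chi$ then reduces, since $T_1\subseteq T_0$ carries the product topology, to continuity of each coordinate $\delta\mapsto\marg_{X_{k-1}}\delta$; this is immediate because for every bounded continuous $f$ on $X_{k-1}$ the composition $f\circ\pi_{X_{k-1}}$ is bounded and continuous on $Z$.

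Next, for bijectivity: injectivity follows because finite-dimensional cylinder sets form a $\pi$-system generating the Borel $\sigma$-algebra on $Z$, so two Borel probability measures agreeing on all such cylinders must coincide. Surjectivity is precisely Kolmogorov's extension theorem on Polish spaces: given a coherent sequence $\langle\delta_k\rangle_{k=1}^\infty$, a unique Borel probability measure $\delta$ on $Z$ exists with $\marg_{X_{k-1}}\delta=\delta_k$ for every $k$, so $\chi(\delta)=\langle\delta_k\rangle$.

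The main obstacle is continuity of $\chi^{-1}$, which is where I expect most of the work to live. Suppose $\chi\big(\delta^{(n)}\big)\to\chi(\delta)$ in $T_1$; I want $\delta^{(n)}\to\delta$ weakly in $\Delta(Z)$. For each coordinate $j$, the $Z_j$-marginals $\marg_{Z_j}\delta^{(n)}$ converge (as a further marginal of the converging sequence $\marg_{X_j}\delta^{(n)}$), so by Prohorov's theorem they form a tight family. Fix $\epsilon>0$ and pick compact $K_j\subseteq Z_j$ with $\marg_{Z_j}\delta^{(n)}(K_j)\geq 1-\epsilon/2^{j+1}$ uniformly in $n$; Tychonoff makes $K=\prod_j K_j\subseteq Z$ compact, and a union bound yields $\delta^{(n)}(Z\setminus K)\leq\sum_j\marg_{Z_j}\delta^{(n)}(Z_j\setminus K_j)\leq\epsilon$. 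Hence $\{\delta^{(n)}\}$ is tight. By Prohorov, every subsequence has a further weakly convergent subsequence with some limit $\delta^*\in\Delta(Z)$; continuity of $\chi$ forces $\chi(\delta^*)=\chi(\delta)$, and injectivity gives $\delta^*=\delta$. Since every subsequence has a subsubsequence converging to $\delta$, the full sequence converges, and $\chi^{-1}$ is continuous.
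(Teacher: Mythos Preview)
Your proof is correct and follows the same route as the paper's: continuity of $\chi$ via continuity of the marginal maps, injectivity via the $\pi$-system of cylinder sets, surjectivity via Kolmogorov's extension theorem, and finally continuity of $\chi^{-1}$. The only difference is that the paper dispatches the last step by citing a known result (weak convergence on a countable product of Polish spaces is determined by the finite-dimensional marginals), whereas you unpack that very result with the standard Prohorov--Tychonoff tightness argument.
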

\begin{proof}Clearly, every element of the range of $\chi$ is coherent by construction. The continuity of $\chi$ follows from the continuity of the marginal mappings. Since a probability measure on a product space is determined by the finite dimensional marginals (cylinder sets form a $\pi$-system), $\chi$ is injective. The Kolmogorov extension theorem, \citet[Theorem 15.26]{MR2378491}, implies that $\chi$ is surjective. The continuity of $\chi^{-1}$ follows from \citet[Corollary 2.4.8]{MR3837546}.
\end{proof}
 For generic $\langle\delta_k\rangle_{k=1}^{\infty}\in T_1'$, we write $\delta$ for the unique probability measure mapped to it by $\chi$.\bigskip

We let $S$ and $C$ be nonempty Polish spaces, $\nu$ a Borel probability measure on $C$ and write $\Delta_\nu(C\times Y)$ for the set of all probability measures on $C\times Y$ with $C$-marginal $\nu$. We now let $Z_0=X_0=S$ and define recursively 
\[Z_{n+1}=\Delta_\nu\big(C\times \Delta(X_0)\times\cdots\times\Delta(X_{n-1})\big)=\Delta_\nu\bigg(C\times\prod_{k=0}^{n-1}\Delta(X_0)\bigg).\] We call abstract types in this setting simply \emph{types}. A type represents an individual agent's hierarchy of beliefs on the population distribution of finite hierarchies of beliefs.\bigskip

Next, we let $Z'_k=Z_{k+1}$ for all $k$ and  $Z'=\prod_{k=0}^\infty Z_k'$. $T_0'$ and $T_1'$ are defined correspondingly. By Lemma \ref{abstcoh}, there is a canonical homeomorphism between $\Delta(Z')$ and $T_1'$. We can exploit the special structure of $T_1'$ to prove the following:
\begin{lemma}\label{soccoh}The function \[\chi':\Delta_\nu(C\times T_0)\to T_1'\] given by 
\[\chi'(\lambda)=\Big\langle\marg_{C\times\prod_{j=0}^{k-1}\Delta(X_j)}\lambda\Big\rangle_{k=1}^\infty\]
is a homeomorphism.
\end{lemma}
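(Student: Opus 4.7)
The plan is to mirror the proof of Lemma~\ref{abstcoh} almost verbatim, now with the extra constraint that the $C$-marginal is pinned down at $\nu$. View $C\times T_0$ as the countable product of Polish spaces with factors $\tilde Z_0=C$ and $\tilde Z_k=\Delta(X_{k-1})$ for $k\geq 1$, so that $\Delta_\nu(C\times T_0)$ is the closed subspace of $\Delta(C\times T_0)$ obtained by fixing the first marginal at $\nu$. The map $\chi'$ simply sends $\lambda$ to the sequence of its marginals on the finite prefixes of this product; each such marginal automatically lies in $\Delta_\nu\big(C\times\prod_{j=0}^{k-1}\Delta(X_j)\big)=Z_{k+1}$, which is identified with the $k$-th coordinate $\Delta(X'_{k-1})$ of $T_0'$ through the definition $Z'_k=Z_{k+1}$.

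First I would verify that $\chi'(\lambda)\in T_1'$. Coherence of the marginal sequence is immediate from the tower property of marginals: the marginal of a marginal is a marginal. The $C$-marginal condition on $\lambda$ propagates to every finite-dimensional marginal, placing the $k$-th component in $Z_{k+1}$. Continuity of $\chi'$ is then immediate because marginal-taking is continuous in the weak topology and $T_0'$ carries the product topology, so continuity reduces to coordinate-wise continuity.

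For injectivity and surjectivity I would copy the strategy of Lemma~\ref{abstcoh}. Injectivity: a Borel probability measure on the Polish product $C\times T_0$ is uniquely determined by its values on the cylinder $\pi$-system, hence by its sequence of finite-dimensional marginals. Surjectivity: given a coherent sequence $\langle\alpha_k\rangle\in T_1'$, read under the identification above as a coherent sequence of measures on the prefixes $C\times\prod_{j=0}^{k-1}\Delta(X_j)$ all with $C$-marginal $\nu$, Kolmogorov's extension theorem (exactly as invoked in Lemma~\ref{abstcoh}) supplies a unique $\lambda\in\Delta(C\times T_0)$ whose finite-dimensional marginals are the $\alpha_k$. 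The shared $C$-marginal $\nu$ is inherited by $\lambda$, giving $\lambda\in\Delta_\nu(C\times T_0)$ and $\chi'(\lambda)=\langle\alpha_k\rangle$. Continuity of $(\chi')^{-1}$ is then handled by the same appeal to \citet[Corollary~2.4.8]{MR3837546} as in Lemma~\ref{abstcoh}.

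The hard part is conceptual rather than technical: setting up the identification between $T_1'$ (formally a product of $\Delta(X'_{k-1})$-coordinates with coherency given by marginals onto $X'_{k-1}=\prod_{j=1}^k Z_j$) and the sequences of $\Delta_\nu$-marginals that $\chi'$ naturally produces, and checking that the two coherency conditions match under this identification. Once the identification is in place, every remaining step is either a direct restriction of Lemma~\ref{abstcoh} to a closed subspace cut out by a continuous marginal map, or a routine continuity verification.
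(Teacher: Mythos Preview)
Your proposal is correct and follows exactly the approach the paper takes: the paper's own proof is a one-line remark that Lemma~\ref{abstcoh} carries over almost verbatim, with the only difference being how Kolmogorov's extension theorem is applied (precisely your observation that the fixed $C$-marginal $\nu$ on every finite prefix is inherited by the extended measure). Your explicit flagging of the identification between the formal coordinates $\Delta(X'_{k-1})$ of $T_0'$ and the $\Delta_\nu$-marginals that $\chi'$ actually produces is a useful gloss on a point the paper leaves implicit.
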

\begin{proof}The proof follows the proof of Lemma \ref{abstcoh} almost verbatim, the only difference is in how one applies the Kolmogorov extension theorem. 
\end{proof}
Lemma \ref{soccoh} allows us to identify $T_1'$ with a population distribution over types. These types need not be coherent. We want to incorporate the restrictions that the types are coherent, that they are certain that the population distribution of types consists only of coherent types, and so on. Dealing with this issue is the next step. For generic $\langle\lambda_k\rangle_{k=1}^{\infty}\in T_1'$, we write $\lambda$ for the unique probability measure mapped to it by $\chi'$. For a measurable set $E\subseteq T_0$, let
\[B(E)=\big\{\langle\lambda_k\rangle_{k=1}^{\infty}\in T_1'\mid \lambda(C\times E)=1\big\}.\]
If $E$ is closed, so is $B(E)$. We can now recursively define $T_n$ for positive $n$ by
\[T_n=\Big\{\langle\delta_k\rangle_{k=1}^\infty\in T_1\mid \delta\big(S\times B(T_{n-1})\big)=1\Big\}\]
and let $T_\infty=\bigcap_n T_n$.

\begin{lemma}\label{FP}The set $T_\infty$ is nonempty, closed, and 
\[T_\infty=\Big\{\langle\delta_k\rangle_{k=1}^\infty\in T_1\mid \delta\big(S\times B(T_\infty)\big)=1\Big\}.\]
\end{lemma}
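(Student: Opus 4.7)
My plan is to prove closedness, nonemptiness, and the fixed-point equation in turn.

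For closedness, I would proceed by induction on $n$. The base case $T_1$ is closed because coherency $\delta_k = \marg_{X_{k-1}}\delta_{k+1}$ is an intersection of countably many closed conditions (continuous marginal maps equated). For the inductive step, if $T_{n-1}$ is closed, then $B(T_{n-1})$ is closed in $T_1'$ (as already noted in the statement preceding the lemma), and so $T_n = \{\langle\delta_k\rangle \in T_1 : \delta(S \times B(T_{n-1})) = 1\}$ is the preimage of $\{1\}$ under the upper semicontinuous map $\langle\delta_k\rangle \mapsto \delta(S \times B(T_{n-1}))$ (Portmanteau on a closed set together with continuity of $\chi^{-1}$), hence closed. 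Then $T_\infty = \bigcap_n T_n$ is closed as an intersection of closed sets.

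For nonemptiness, I would construct an explicit ``common certainty'' type. Fix any $s_0 \in S$ and recursively define $\delta_k^* \in \Delta(X_{k-1})$ and $z_k^* \in Z_k$ by $\delta_1^* = \delta_{s_0}$, $z_k^* = \nu \otimes \delta_{(\delta_1^*, \ldots, \delta_k^*)}$, and $\delta_{k+1}^* = \delta_{(s_0, z_1^*, \ldots, z_k^*)}$. Set $t^* = \langle\delta_k^*\rangle$. Coherency is immediate for these Dirac-type measures, so $t^* \in T_1$. I would then show $t^* \in T_n$ by induction on $n$. The key observation is that with $\lambda^* = \nu \otimes \delta_{t^*} \in \Delta_\nu(C \times T_0)$, direct computation of the marginals gives $\chi'(\lambda^*) = \langle z_k^*\rangle$, while $\chi^{-1}(t^*)$ is the Dirac at the point whose $S$-coordinate is $s_0$ and whose remaining data corresponds via $\chi'$ to $\lambda^*$. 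Under the inductive hypothesis $t^* \in T_{n-1}$, we have $\lambda^*(C \times T_{n-1}) = \delta_{t^*}(T_{n-1}) = 1$, so $\chi'(\lambda^*) \in B(T_{n-1})$, whence $\chi^{-1}(t^*)$ assigns full mass to $S \times B(T_{n-1})$, giving $t^* \in T_n$.

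For the fixed-point equation, first observe that $\{T_n\}$ is decreasing (by induction on $n$ together with monotonicity of $B$: $E \subseteq F$ implies $B(E) \subseteq B(F)$). By continuity from above of probability measures on the decreasing sequence $\{C \times T_n\}$, one has $B(T_\infty) = \bigcap_n B(T_n)$ and hence $\delta(S \times B(T_\infty)) = \lim_n \delta(S \times B(T_n))$. For $t \in T_\infty$ every term in this limit equals $1$, proving the $\subseteq$ inclusion. Conversely, if $\delta(S \times B(T_\infty)) = 1$, then $B(T_\infty) \subseteq B(T_m)$ forces $\delta(S \times B(T_m)) = 1$ for every $m$, i.e., $t \in T_{m+1}$ for every $m$, so $t \in T_\infty$.

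The main obstacle is the nonemptiness step: although the Dirac-type construction of $t^*$ is conceptually clean, the verification requires simultaneously tracking the two homeomorphisms $\chi$ and $\chi'$ and confirming the self-referential identity $\chi'(\nu \otimes \delta_{t^*}) = \langle z_k^*\rangle$, which is precisely what encodes ``common certainty of $t^*$'' and what propagates through the induction. Closedness and the fixed-point equation reduce to Portmanteau-plus-induction and monotone continuity of probability measures, respectively, and are essentially routine once the set-up is in place.
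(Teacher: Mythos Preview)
Your proposal is correct and follows essentially the same route as the paper: decreasing sequence via monotonicity of $B$, closedness by induction and intersection, the fixed-point identity via continuity from above, and nonemptiness via an explicit degenerate hierarchy. The paper's nonemptiness argument is phrased more tersely (``replace $S$ by the singleton $\{s\}$; the resulting model has a single hierarchy, trivially in every $T_n$''), but this is exactly your Dirac construction $t^*$ viewed abstractly, so the two arguments coincide.
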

\begin{proof}We first show that $\langle T_n\rangle$ is a decreasing sequence. Clearly, $T_0\supseteq T_1$ are closed sets. Since $B$ is an increasing operator that maps closed sets to closed sets, $T_{n-1}\supseteq T_n$ implies
\[T_{n+1}=\Big\{\langle\delta_k\rangle_{k=1}^\infty\in T_1\mid \delta\big(S\times B(T_n)\big)=1\Big\}\subseteq\]
\[\Big\{\langle\delta_k\rangle_{k=1}^\infty\in T_1\mid \delta\big(S\times B(T_{n-1})\big)=1\Big\}=T_n.\]
Consequently, we have for each $\langle\delta_k\rangle_{k=1}^\infty\in T_\infty$ that $\delta\big(S\times B(T_n)\big)=1$ holds for all $n$ and, therefore, also
 $\delta\big(S\times B(T_\infty)\big)=1$. Conversely, $\delta\big(S\times B(T_\infty)\big)=1$ implies trivially that $\delta\big(S\times B(T_n)\big)=1$ for all $n$. This gives us the equality
\[T_\infty=\Big\{\langle\delta_k\rangle_{k=1}^\infty\in T_1\mid \delta\big(S\times B(T_\infty)\big)=1\Big\}.\]
Also, $T_\infty$ is closed as an intersection of closed sets. To see that $T_\infty$ is nonempty, take some $s\in S$ and replace $S$ by the singleton $\{s\}$. For the resulting model, there is a single hierarchy in $T_0$, which is trivially in every $T_n$, even for the original $S$. 
\end{proof}

\begin{proposition}\label{canhomeom}$T_\infty$ is the homeomorphic image of $\Delta\big(S\times\Delta_\nu(C\times T_\infty)\big)$ under the mapping given by
\[\alpha \mapsto \chi\big(\alpha\circ(\id_S,\chi')^{-1}\big).\]
\end{proposition}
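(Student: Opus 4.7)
The plan is to exhibit the stated map $\Phi(\alpha):=\chi\bigl(\alpha\circ(\id_S,\chi')^{-1}\bigr)$ as the composition of two homeomorphisms onto their images—pushforward by $(\id_S,\chi')$, followed by $\chi$—and then use Lemma \ref{FP} to identify the final image with $T_\infty$.

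First I would verify that $\chi'$ restricts to a homeomorphism between $\Delta_\nu(C\times T_\infty)$ and $B(T_\infty)$. By Lemma \ref{soccoh}, $\chi'$ is a homeomorphism $\Delta_\nu(C\times T_0)\to T_1'$; and for $\lambda\in\Delta_\nu(C\times T_0)$, the measure $\lambda$ is supported on the closed set $C\times T_\infty$ (i.e.\ $\lambda\in\Delta_\nu(C\times T_\infty)$) iff $\chi'(\lambda)\in B(T_\infty)$, directly from the definition of $B$ and the fact that marginals determine mass on product sets. Hence $(\id_S,\chi')$ restricts to a homeomorphism $S\times\Delta_\nu(C\times T_\infty)\to S\times B(T_\infty)$, and pushing forward by a homeomorphism yields a homeomorphism from $\Delta\bigl(S\times\Delta_\nu(C\times T_\infty)\bigr)$ onto the subset of $\Delta(S\times T_1')$ of measures giving mass one to $S\times B(T_\infty)$.

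Next I would invoke Lemma \ref{abstcoh}. The image of $\chi'$ lies in $\prod_{k\ge 1}Z_k'\subseteq Z'$ by the formula of $\chi'$, so the pushed-forward measure is an element of $\Delta(S\times Z')=\Delta(Z)$ and $\chi$ is applicable, producing an element of $T_1$. By Lemma \ref{FP}, a type $\chi(\delta)\in T_1$ belongs to $T_\infty$ iff $\delta(S\times B(T_\infty))=1$, which is precisely the support condition inherited from the first step; hence $\Phi$ takes values in $T_\infty$. For surjectivity, given $t\in T_\infty$, set $\delta=\chi^{-1}(t)$; Lemma \ref{FP} yields $\delta(S\times B(T_\infty))=1$, and pulling back via the homeomorphism $(\id_S,\chi')^{-1}$ from $S\times B(T_\infty)$ recovers a unique $\alpha\in\Delta\bigl(S\times\Delta_\nu(C\times T_\infty)\bigr)$ with $\Phi(\alpha)=t$. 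Injectivity and continuity of $\Phi^{-1}$ follow because each step is a homeomorphism onto its image.

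The main obstacle is the careful bookkeeping of the identifications: one must see $T_1'$ embedded into $Z'$ via the formula defining $\chi'$, so that $\chi$ can be applied to the pushed-forward measure, and one must translate the type-level certainty condition $\lambda(C\times T_\infty)=1$ into the population-level concentration $\delta(S\times B(T_\infty))=1$. Once this correspondence is laid out, Lemma \ref{FP} provides exactly the fixed-point characterization needed to identify the image of $\Phi$ with $T_\infty$.
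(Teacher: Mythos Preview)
Your proposal is correct and follows essentially the same approach as the paper's own proof: factor the map as the pushforward by $(\id_S,\chi')$ onto $\Delta\big(S\times B(T_\infty)\big)$ followed by $\chi$, using the definition of $B$ (together with Lemma~\ref{soccoh}) for the first homeomorphism and the fixed-point characterization of $T_\infty$ in Lemma~\ref{FP} (together with Lemma~\ref{abstcoh}) for the second. Your version simply spells out more of the bookkeeping---in particular the identification of $T_1'$ inside $Z'$ (which relies on $Z'_0$ being a singleton) and the explicit surjectivity/injectivity checks---where the paper is content to state the two constituent homeomorphisms in one sentence each.
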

\begin{proof}By Lemma \ref{FP}, $T_\infty$ is the homeomorphic image of $\Delta\big(S\times B(T_\infty)\big)$ under $\chi$ and, by definition,
\[B(T_\infty)=\big\{\langle\lambda_k\rangle_{k=1}^{\infty}\in T_1'\mid \lambda(C\times T_\infty)=1\big\},\]
so $B(T_\infty)$ is the homeomorphic image of $\Delta_\nu(C\times T_\infty)$ under $\chi'$. 
\end{proof}

We use these results to construct a particular type space. Let $T^*=S\times\Delta_\nu(C\times T_\infty)$ and let $\sigma^*:T^*\to S$ be the projection onto $S$. Finally, let $\tau^*:T^*\to\Delta_\nu(C\times \Delta(T^*))$ be the composition of the inverse of the homeomorphism defined in the statement of Proposition \ref{canhomeom} with the  projection onto the second coordinate. We call $(T^*,\sigma^*,\tau^*)$ the \emph{canonical type space}. Note that $\sigma^*$ and $\tau^*$ are continuous, and $T^*$ is compact whenever $S$ is.

\small
\bibliography{ICR}

\end{document}